\title{Intersection graphs of segments and $\exists\R$}
\author{
{\sc Ji\v{r}\'{\i} Matou\v{s}ek}\thanks{Supported
by the  ERC Advanced Grant No.~267165
and by the project CE-ITI (GACR P202/12/G061).}
\\
   {\footnotesize Department of Applied Mathematics}\\[-1.5mm]
   {\footnotesize  Charles University, Malostransk\'{e} n\'{a}m. 25}\\[-1.5mm]
{\footnotesize  118~00~~Praha~1,
   Czech Republic, and}\\
{\footnotesize    Institute of  Theoretical Computer Science}\\[-1.5mm]
{\footnotesize    ETH Zurich,
      8092 Zurich, Switzerland}
}
\date{}
\newif\ifafour
\afourtrue

\documentclass[11pt]{article}

\usepackage{a4}

\usepackage{graphicx,url}
\usepackage{color}

\usepackage{amsmath,amsfonts,amssymb,dsfont,theorem}

\newtheorem{theorem}{Theorem}[section]

\newtheorem{lemma}[theorem]{Lemma}

\newtheorem{proposition}[theorem]{Proposition}

\newtheorem{exercise}[theorem]{Exercise}
\newtheorem{problemo}[theorem]{Problem}

\newcommand{\qed}{\hfill\ensuremath{\Box}}
\newenvironment{proof}{\noindent\textbf{Proof.}
}{\qed\par\medskip}

\newenvironment{proofof}[1]{\medskip\noindent\textbf{Proof of #1.}
}{\qed\par\medskip}
\newenvironment{proofhd}[1]{\noindent\textbf{#1.}}{\qed\par\medskip}
{\ifx&#1&%
  \begin{problemo}\else\begin{problemo}[#1]\fi\upshape}%
    {\end{problemo}}

\newcommand{\Z}{\ensuremath{\mathds Z}}

\newcommand{\R}{\ensuremath{\mathds R}}

\newcommand{\PP}{\ensuremath{\mathcal P}}
\newcommand{\DD}{\ensuremath{\mathcal D}}
\newcommand{\RR}{\ensuremath{\mathcal R}}

\newcommand{\conv}{\ensuremath{\mathrm{conv}}}

\newcommand\makevec[1]{{\boldsymbol{#1}}}

\def \pp {\makevec{p}}
\def \qq {\makevec{q}}
\def \xx {\makevec{x}}
\def \yy {\makevec{y}}

\def \bY {\makevec{Y}}

\newcommand\pzero{\mbox{{\bf 0}}}
\newcommand\pone{\mbox{{\bf 1}}}
\newcommand\pinfty{\makevec{\infty}}

\DeclareMathOperator{\SEG}{SEG}
\DeclareMathOperator{\INTS}{INTS}
\DeclareMathOperator{\ETR}{ETR}
\DeclareMathOperator{\INEQ}{INEQ}
\DeclareMathOperator{\FEASIBLE}{FEASIBLE}
\DeclareMathOperator{\STRICTINEQ}{STRICT-INEQ}
\DeclareMathOperator{\RECOG}{RECOG}
\DeclareMathOperator{\CR}{cr}
\newcommand\RCR{\overline{\CR}}
\DeclareMathOperator{\CONV}{CONV}

\DeclareMathOperator{\sgn}{sgn}

\newcommand\eps{\varepsilon}

\newcommand{\heading}[1]{\vspace{1ex}\par\noindent{\bf\boldmath #1}}
\newcommand\defi[1]{{\bf\boldmath #1}}


\def\immediateFigure#1{%
\smallskip\begin{center}#1\end{center}\smallskip }

\newcommand{\immfig}[1]  
{\immediateFigure{\mbox{\includegraphics{#1}}}}

\newcommand{\immfigw}[2] 
{\immediateFigure{\mbox{\includegraphics[width=#2]{#1}}}}

\newlength{\fparwidth}
\setlength{\fparwidth}{\textwidth}
\addtolength{\fparwidth}{-2em}
\newlength{\myparindent}
\setlength{\myparindent}{\parindent}

\begin{document}
\maketitle

\begin{abstract} 
A graph $G$ with vertex set $\{v_1,v_2,\ldots,v_n\}$
is an \emph{intersection graph of segments} if there are
segments $s_1,\ldots,s_n$ in the plane such that $s_i$ and $s_j$
have a common point if and only if $\{v_i,v_j\}$ is an edge of~$G$.
In this expository paper, we consider the
algorithmic problem of testing whether a given abstract graph
is an intersection graph of segments. 

It turned out that
this problem is complete for an interesting recently introduced
class of computational problems, denoted by $\exists\R$.
This class consists of problems that can be reduced, in polynomial
time, to solvability of a system of polynomial inequalities
in several variables over the reals. We 
discuss some subtleties in the definition of $\exists\R$, and
we provide a complete and streamlined account of a proof of the
$\exists\R$-completeness of the recognition problem for segment
intersection graphs. Along the way, we establish $\exists\R$-completeness
of several other problems. We also present a decision algorithm, due
to Muchnik, for the first-order theory of the reals.
\end{abstract}

\section{Introduction}

Let $G$ be a graph with vertex set $\{v_1,\ldots,v_n\}$.
We say that $G$ is an \emph{intersection graph of segments} if there are
straight segments $s_1,\ldots,s_n$ in the plane such that,
for every $i,j$, $1\le i<j\le n$, the segments $s_i$ and $s_j$
have a common point if and only if $\{v_i,v_j\}\in E(G)$.
Such segments $s_1,\ldots,s_n$ are called a \emph{segment
representation} of $G$.
For brevity, we will often say ``segment graph'' instead 
of ``intersection graph
of segments,'' and we let $\SEG$ denote the class of all segment
graphs.

Segment graphs constitute a difficult and much studied class.
They turned out to have strong algebraic aspects, and these are the focus
of the present paper. We should stress that there are also
numerous interesting non-algebraic results; one recent highlight,
settling a long-standing conjecture, is that every planar graph
belongs to $\SEG$ \cite{ChalopinGoncalves}, and another is a construction
of triangle-free $\SEG$ graphs with arbitrarily large chromatic 
number~\cite{pawlik2012triangle}.

The story we want to present here begins with the algorithmic question, 
given an abstract graph $G$, does it belong to $\SEG$? We will
refer to this as the \emph{recognition problem for segment graphs},
abbreviated $\RECOG(\SEG)$.

At first encounter, it is probably not obvious that 
there are any graphs at all
not belonging to $\SEG$. Here is one of the simplest examples,
a $K_5$ with subdivided edges:
\immfig{j-k5sd1}
Without going into any details, we just mention the idea of
the proof: assuming for contradiction that this graph is a segment graph,
one obtains a planar drawing of $K_5$, which is impossible.

After enough understanding of combinatorial properties of segment
graphs was accumulated, $\RECOG(\SEG)$ was proved NP-hard in
\cite{km-igs-94}. For many combinatorial problems, 
such as the existence of a Hamiltonian cycle in a graph, proving
NP-hardness is more or less the end of the story, since 
\emph{membership} in NP, i.e., a polynomial-time certificate
of a YES answer, is obvious. 

However, for $\RECOG(\SEG)$ this was only a beginning, since 
membership in NP is not clear at all, and today it
seems quite believable that $\RECOG(\SEG)$ is actually not
 in~NP.

Indeed, how should one certify that $G$ is a segment graph?
Some thought, which we leave to the reader, reveals that if
$G$ has a segment representation, then we may perturb and scale the segments
so that all of their endpoints have integer coordinates.
So if we knew that all the coordinates
of the endpoints have at most polynomially many digits  (in other
words, that the absolute value of the coordinates is bounded by
$2^{n^C}$, where $n=|V(G)|$ and $C$ is a constant), 
membership in NP would follow immediately. 

Serious people seriously conjectured that the number
of digits can be polynomially bounded---but it cannot.

\begin{theorem}\label{t:manydig}
For every sufficiently large $n$, there are $n$-vertex graphs in
$\SEG$  for which every segment representation with integral
endpoints has coordinates doubly exponential in $n$,
that is, with $2^{\Omega(n)}$ digits.
\end{theorem}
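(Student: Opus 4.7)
The plan is to realize, as a segment intersection graph of polynomial size, a chain of squaring equations whose unique real solution is doubly exponential. Specifically, the system
\[
y_1=2,\qquad y_{i+1}=y_i^{\,2}\quad(i=1,\ldots,n-1)
\]
has the unique solution $y_i=2^{2^{i-1}}$, so it suffices to exhibit graphs $G_n$ on $O(n)$ vertices such that in every segment representation certain combinatorially distinguished points on a fixed ``number line'' must occupy the positions $y_1,\ldots,y_n$.

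First I would build a constant-size \textbf{frame} gadget: a bunch of segments whose intersection graph pins them, up to a projective transformation, into a canonical configuration that specifies a line $\ell$, an origin, and reference points at values $1$ and $2$ on~$\ell$. Next I would design a small \textbf{squaring gadget} that, plugged onto the frame and onto a segment representing a value $y$ on~$\ell$, forces the existence of a further distinguished point of value $y^{2}$ on~$\ell$. The template is the classical von~Staudt projective construction, which realizes multiplication through a constant number of line incidences; the job is to simulate those line incidences with straight segments by adding sufficiently many auxiliary segments so that the combinatorial intersection pattern alone forces the required incidences and leaves no additional freedom. Cascading $n-1$ squaring gadgets off the frame produces the graph $G_n$.

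Having produced $G_n$, the theorem follows from two remarks. Membership in $\SEG$ is witnessed by the ``intended'' real representation. For the lower bound on integer representations, one argues that after projectively normalizing the frame so that the origin and the reference points $1,2$ land at some integer positions, the points corresponding to $y_1,\ldots,y_n$ must sit at the forced rational (hence, after clearing denominators, integer) locations $2^{2^{i-1}}$; such a normalization rescales the frame by a bounded rational factor, so the integer coordinate of the last marked point still carries $2^{\Omega(n)}$ digits.

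The step I expect to be the main obstacle is the combinatorial rigidity of the squaring gadget. To enforce the equation $y'=y^{2}$ purely at the level of segment intersections, one must eliminate the degrees of freedom the segments would otherwise enjoy by showing that any small perturbation either preserves the intended incidence structure or else creates/destroys a required intersection, thereby changing the combinatorial graph. This is the same rigidity phenomenon underlying Mn\"ev-style universality theorems for oriented matroids, and engineering it in the restrictive setting of straight segments (whose endpoints and finite extent cannot be ignored, unlike for full lines) is the technical heart of the construction.
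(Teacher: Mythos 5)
Your core idea---forcing a chain of squarings $y_{i+1}=y_i^2$ via von Staudt constructions so that some realization parameter must equal $2^{2^{n}}$---is exactly the idea the paper uses. But be aware that the paper only \emph{proves} the weaker bound $2^{\Omega(\sqrt n)}$ by this route and cites M\"uller--McDiarmid for the full $2^{\Omega(n)}$ statement; the loss of the square root is not an accident. The paper does not try to build bespoke segment gadgets; it builds a constructible point configuration of $O(k)$ points realizing the squaring chain, then pushes it through the reduction chain (constructible $\to$ simple order type $\to$ line arrangement $\to$ segment graph), and the last step requires an \emph{ordering gadget} of $\Theta(k)$ extra segments along each of the $\Theta(k)$ lines to force the combinatorial description, yielding $n=O(k^2)$ vertices. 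Your claim of $O(n)$ vertices rests entirely on the ``combinatorial rigidity of the squaring gadget,'' which you yourself flag as the main obstacle and then do not argue. This is not a routine verification: segment intersection graphs cannot force incidences or collinearities directly, only intersection patterns, and all known ways of forcing even the \emph{order} of crossings along a segment cost many auxiliary segments. A constant-size frame that pins a projective scale, and a constant-size gadget that forces $y'=y^2$, are precisely what nobody knows how to build; without them your vertex count, and hence the exponent $\Omega(n)$ versus $\Omega(\sqrt n)$, is unsupported.

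The second gap is in extracting the digit lower bound. A segment representation of your $G_n$ determines at best the order type of an underlying configuration, and any projective image of the intended configuration realizes the same order type; the marked points are therefore \emph{not} forced to sit at the positions $2^{2^{i-1}}$, and your ``normalization by a bounded rational factor'' is not justified (the projective map relating an arbitrary integral realization to the canonical one can have arbitrarily large coefficients). The paper avoids this by working with cross-ratios, which are projectively invariant: it arranges that \emph{every} realization of the order type contains four collinear points with cross-ratio $2^{2^k}$, and then tracks how the ratio of the largest to the smallest interpoint distance behaves through each reduction step (it can only decrease when points are deleted and replaced by a point in their convex hull), concluding that an $M$-digit integral segment representation forces $M\ge 2^{\Omega(k)}$. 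You need an invariant of this kind; arguing about absolute positions cannot work.
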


A first result of this kind, with
the weaker bound of $2^{\Omega(\sqrt n)}$, was proved in
\cite{km-igs-94}.
We will establish this weaker bound in Section~\ref{s:mnev}.
The stronger result in the theorem was obtained by
M\"uller and McDiarmid \cite{McDiarmidMuellerDisks}.
A similar ``large coordinates''
phenomenon in the setting of line arrangements, which will be mentioned later,
was observed earlier by Goodman, Pollack, and Sturmfels
\cite{gps-iscir-90}, based on a fundamental work of
Mn\"ev \cite{Mnev-in-Rochlin}.

Related to these developments, it was also gradually revealed
that the recognition problem for segment graphs
 is computationally at least as hard as
various other geometric and algebraic computational
problems---most notably, the solvability of a system of strict
polynomial inequalities in several variables over the reals.

More recently, Schaefer and \v{S}tefankovi\v{c} 
\cite{SchaeStef-Nash} (also see \cite{Schaefer-surv-exR}) introduced
 a new complexity class, denoted by $\exists\R$, and based on
previous work, they proved many natural and well-known problems
complete for this class, including $\RECOG(\SEG)$.
It is known that $\exists\R$ contains NP, and is contained in PSPACE
(the computational problems solvable in polynomial space), but 
for all we know, both of these inclusions may be strict.

The purpose of the present expository paper is twofold.
First,  we aim at a complete and streamlined
proof of the $\exists\R$-completeness of the recognition problem
for segment graphs, which contains many nice ideas and
has not been readily available in the literature in full.
Second, and more significantly, we use the problem $\RECOG(\SEG)$
as  an opportunity to introduce the class $\exists\R$
in detail and to treat some subtleties in its definition,
as well as several algebraic $\exists\R$-complete problems.
Hopefully this may contribute to  popularizing
this class, which so far may not be as widely known as it would
deserve. 

The best known algorithms for $\exists\R$-complete problems
work in exponential time and polynomial space, and they
are quite sophisticated and technically demanding.
We thus present another, suboptimal  but much simpler algorithm
due to Muchnik.
It actually solves a more general problem,
that of quantifier elimination in the first-order theory of the reals,
a classical problem first solved by Tarski in the 1950s.
Muchnik's algorithm can be explained in several pages, and it contains
some of the ideas used in more sophisticated algorithms.

Since this paper started out as a course material, it contains
a number of exercises. These may be useful for someone wishing
to teach the material, and also for self-study---ideas that
one discovers on his/her own are usually remembered much better.
Readers not interested in solving exercises may take them
just as a peculiar way of stating auxiliary results without proofs.

\heading{Acknowledgments. } 
I am especially grateful to Marcus Schaefer 
for answering numerous questions, reading a draft of my exposition,
and providing many useful comments.
I would  also like to thank to Rado Fulek, Vincent Kusters,
Jan Kyn\v{c}l, J\"urgen Richter-Gebert,
Zuzana Safernov\'a, and Daniel \v{S}tefankovi\v{c} for
various contributions,  such as proofreading, 
corrections, answering questions, etc. 
It was a pleasure to teach courses including
the material presented here together
with Pavel Valtr in Prague and with Michael Hoffmann and Emo Welzl
in Zurich.


\section{Recognizing segment graphs and the existential theory of~$\R$}

\heading{An algebraic re-formulation. }
It is not easy to find any algorithm at all to recognize segment graphs.
In desperation we can turn to computational algebra and
formulate the problem using polynomial equations and inequalities.

Suppose that a given graph $G=(V,E)$, $V=\{v_1,\ldots,v_n\}$,
is a segment graph, represented by segments $s_1,\ldots,s_n$.
By a suitable rotation of the coordinate system we can achieve that none
of the segments is vertical. Then the segment $s_i$ representing vertex $i$ can
be algebraically described as the set
$\{(x,y)\in\R^2: y=a_ix+b_i, c_i\le x\le d_i\}$ for some real
numbers $a_i,b_i,c_i,d_i$. 

Letting $\ell_i$ be the line containing
$s_i$, we note that $s_i\cap s_j\ne\emptyset$ if either 
$\ell_i=\ell_j$ and the intervals $[c_i,d_i]$ and $[c_j,d_j]$
overlap, or $\ell_i$ and $\ell_j$ intersect in a single point
whose $x$-coordinate lies in both of the intervals $[c_i,d_i]$ and 
$[c_j,d_j]$. As is easy to calculate, that $x$-coordinate
equals $\frac{b_j-b_i}{a_i-a_j}$. 

Let us introduce variables $A_i,B_i,C_i,D_i$ representing the unknown
quantities $a_i$, $b_i$, $c_i$, $d_i$, $i=1,2,\ldots,n$. (Following a convention
in a part of the algebraic literature, we will denote variables by capital
letters.) Then $s_i\cap s_j\ne\emptyset$ can be expressed by the following,
somewhat cumbersome, predicate $\INTS(A_i,B_i,C_i,D_i,A_j,B_j,C_j,D_j)$,
given by
\begin{eqnarray*}
&&\Big( A_i=A_j\wedge B_i=B_j \wedge\,\neg(D_i< C_j\vee D_j<C_i)\Big)\\
&&{}\vee \Big(A_i>A_j\wedge C_i(A_i-A_j)\le B_j-B_i\le D_i(A_i-A_j)\\
&& \ \ \ \ \ \ {}\wedge C_j(A_i-A_j)\le B_j-B_i\le D_j(A_i-A_j)\Big)\\
&& {}\vee \Big(A_i<A_j\wedge C_i(A_i-A_j)\ge B_j-B_i\ge D_i(A_i-A_j)\\
&& \ \ \ \ \ \ {}\wedge C_j(A_i-A_j)\ge B_j-B_i\ge D_j(A_i-A_j)\Big)\\
\end{eqnarray*}
(this is only correct if we ``globally'' assume that $C_i\le D_i$
for all $i$). Here, in accordance with a usual notation in logic,
 $\wedge$ stands for conjunction, $\vee$ for disjunction,
and $\neg$ for negation.

The existence of a $\SEG$-representation of $G$ can then be expressed
by the formula
\begin{eqnarray*}
&&(\exists A_1 B_1 C_1 D_1 \ldots A_n B_n C_n D_n)
\Bigl(\bigwedge_{i=1}^n C_i\le D_i\Bigr)\\
&&\ \ {}\wedge\biggl(\bigwedge_{\{i,j\}\in E} \INTS(A_i,B_i,B_i,D_i,A_j,B_j,C_j,D_j)\biggr)\\
&&\ \ {}\wedge \biggl(\bigwedge_{\{i,j\}\not\in E} \neg\INTS(A_i,B_i,B_i,D_i,A_j,B_j,C_j,D_j)\biggr).
\end{eqnarray*}
The existential quantifier in this formula quantifies all the 
variables written after it, and these variables range over $\R$,
the real numbers. 

\begin{exercise}\label{ex:pureseg}
Prove that every segment graph has a representation in which
no two segments lie on the same line (warning: this is not 
entirely easy). Therefore, the first part
of the predicate $\INTS$ is not really necessary.
\end{exercise}

\begin{exercise}\label{ex:conv-alg} Express the representability of
a given graph $G$ as an intersection graph of convex sets in the plane
 by a formula similar to the one above.
Hint: disjoint compact convex sets can be strictly separated by a line.
\end{exercise}

\heading{First-order theory of the reals. } 
It may seem that we have
replaced the problem of recognizing segment graphs, which was at
least geometrically intuitive, by a hopelessly complicated algebraic problem.
However, the \emph{only} known recognition algorithm relies exactly
on such a conversion and on a general algorithm for testing the validity
of this kind of formulas. What is more, recognizing segment graphs
turns out to be \emph{computationally equivalent} to validity
testing for a fairly broad class of formulas.

First we define a still more general class of formulas.
A \defi{formula of the first-order theory
of the reals} is made
of the constants $0$ and $1$, the binary operator symbols $+$, $-$ and
$\times$, the relation symbols $\le$, $<$, $\ge$, $>$, $=$, $\neq$,
the logical connectives
$\wedge$, $\vee$, $\neg$, and $\Leftrightarrow$ (equivalence),
 the quantifiers $\forall$ and $\exists$,
symbols for variables, and parentheses,
using simple syntactic rules, which the reader could no doubt easily supply.
In the rest of this paper, the word ``formula'' without other
qualifications means a formula in the first order theory of the reals.

Our including of ``derived'' symbols like $-$, $\neq$, $>$, $\Leftrightarrow$
 etc. is slightly
nonstandard, compared to the literature in logic or model theory, but
for considering computational complexity it appears more convenient
to regard them as primitive symbols.

Semantically, all variables range over the real 
numbers\footnote{In a first-order theory, we can quantify over
individual elements, in our case over the real numbers;
in a second-order theory, quantification over \emph{sets}
of real numbers would be allowed as well.} and
$+$, $-$, $\times$, $\le$, etc. are to be interpreted as the usual operations
and relations on~$\R$.

A formula is called a \defi{sentence} if it has no \emph{free} variables;
that is, every variable occurring in it is bound by a quantifier.
Our formula above expressing the $\SEG$-representability of a given
graph is an example of a sentence.

The first-order theory of the reals is sufficiently powerful
to express a wide variety of geometric problems. At the same
time, it is \emph{decidable}, which means that there
is an algorithm that, given a sentence, decides whether it
is true or false. We will present one such algorithm
in Section~\ref{s:muchnik}.
This is in contrast with the first-order
theories in many other areas of mathematics, such as the Peano arithmetic,
which are undecidable.

\heading{Prenex form. }
We will always assume, w.l.o.g., that the considered formulas
are in \defi{prenex form}; that is, of the form
$(Q_1X_1)(Q_2X_2)\ldots(Q_kX_k) \Phi$, where $Q_1,\ldots,Q_k$
are quantifiers and $\Phi$ is a quantifier-free formula.
An arbitrary formula is easily converted to a prenex form
by pushing the quantifiers outside using simple rules,
such as rewriting $\neg (\exists X)\Phi$ to $(\forall X)\neg\Phi$,
rewriting $((\exists X)\Phi)\vee \Psi$ to $(\exists X)(\Phi\vee\Psi)$
(this assumes that the name of the quantified variable $X$ does not
collide with the name of any variable in $\Psi$), etc.
Such a transformation changes the length of the formula
by at most a constant factor.

A general quantifier-free formula $\Phi$
with variables $X_1,\ldots,X_n$ has the form
\[
\Phi=\Phi(X_1,\ldots,X_n)= F(A_1,A_2,\ldots,A_m).
\]
Here $F$ is a Boolean formula and each $A_i$ is an
\emph{atomic formula} of the form $p_i(X_1,\ldots,X_n)~\mathrm{rel}~0$,
where rel is one of $<$, $\le$, $>$, $\ge$, $=$, $\neq$,  and
$p_i$ is a polynomial in $n$ variables with integer coefficients.

\heading{Size of formulas. }
When discussing the complexity of an algorithm accepting
a formula as an input, we will measure the input size
by the \emph{length} of the formula, i.e., the number of 
symbols in it.\footnote{The
length is basically proportional to the number of bits needed to encode
the formula. An exceptional case are the symbols of variables:
if the formula contains $n$ variables, then encoding 
a variable symbol needs about $\log n$ bits, at least on the average.
Thus, in the worst case, the encoding of a formula of length $L$
may need up to $L\log L$ bits.
But the complexity of the considered algorithms is so large
that this subtlety can be safely ignored, as is done in most of
the literature.}

In practice, one also writes integer constants, such as $13$, and powers, such as $X^{5}$, in the formulas.
An integer constant $k$ can be formed
with $O(\log k)$ symbols using the binary expansion of $k$;
e.g., $13=2^3+2^2+1=((1+1)+1)\times(1+1)\times(1+1)+1$.

On the other hand, a power $X^k$ is really an abbreviation 
for the $k$-fold product $X\times \cdots \times X$. 
We must be particularly careful with
more complicated expressions involving powers: for example,
$(\cdots ((X^2)^2\cdots)^2$, with $n$ second powers, could be written
using $O(n)$ symbols if we permitted the power operation---but we
do not allow it, and so this expression needs formula of length
about~$2^n$.

Thus, in brief, integer constants are written in binary, but exponents
must be taken as shorthands for multiple products.

Another thing to observe is that we cannot a priori assume polynomials
to be written in the usual form, as a sum of monomials with integer
coefficients. Indeed, the polynomial $(1+X_1)(1+X_2)\cdots(1+X_n)$
has $O(n)$ symbols according to our definition, but
if we multiplied out the parentheses, we would get $2^n$ distinct monomials,
and thus a formula of exponential length. This kind of subtleties can
usually be ignored in logic, but they start playing a role if we
consider polynomial-time reductions.

\heading{Complexity of decision algorithms. }
The best known decision algorithms for the first-order theory
of the reals can decide a sentence of length $L$ in time at most
$2^{2^{O(L)}}$, i.e., doubly exponential.

A more refined analysis of such algorithms uses several
other parameters besides the formula length, such as
the number of variables, the maximum degree of a polynomial
appearing in the formula, or the number of quantifier alternations
(e.g., the formula $(\exists X_1 X_2)(\forall X_3 X_4 X_5)(\exists X_6)\Phi$,
where $\Phi$ is quantifier-free, has three alternations).
For example, some of the algorithms are doubly exponential
only in the number of quantifier alternations, while the
dependence on other parameters is much milder.
We will discuss this somewhat more concretely at the end of 
Section~\ref{s:muchnik}.

\heading{Semialgebraic sets. } 
A set $S\subseteq \R^n$ is called \defi{semialgebraic}
if it can be described by a quantifier-free formula;
i.e., if $S=\{\xx\in\R^n:\Phi(\xx)\}$
for some $\Phi=\Phi(X_1,\ldots,X_n)$ as above. Thus, a semialgebraic
set is a set-theoretic combination of finitely many zero
sets and nonnegativity sets of polynomials.

\begin{exercise}\label{ex:1d-semi}{\rm (a)}
Describe semialgebraic sets in $\R^1$. 

{\rm(b)} Show that a semialgebraic
set in $\R^1$ defined by a quantifier-free formula of length $L$
has $O(L)$ connected components. More concretely, if the formula
$\Phi(X)$ involves polynomials $p_1(X),\ldots,p_m(X)$,
bound the number of components in terms of the degrees of the~$p_i$.
\end{exercise}

\heading{The existential theory of the reals and $\ETR$. }
Among the formulas of the first-order theory of the reals,
the quantifier-free ones are, in a sense, the simplest.
In particular, quantifier-free \emph{sentences} are trivially
decidable, since they involve only constants.

The next level of difficulty are the \emph{existential
formulas}, of the form 
\[
(\exists X_1 X_2\ldots X_k)\Phi,
\]
with $\Phi$ quantifier-free. For example, our formula above
for recognizing segment graphs is existential, and many
other questions about graph representations etc.\ can also be
expressed by an existential formula.

Let $\ETR$, abbreviating \emph{existential theory of the reals},
stand for the decision problem whose input is an existential
sentence $\Psi$ of the first-order theory of the reals, and the
output is the truth value of $\Psi$, TRUE or FALSE.

\begin{exercise}\label{ex:ETR-NPh}
 Show that $\ETR$ is {\rm NP}-hard.
\end{exercise}

The true complexity status of $\ETR$ is not known at present.
The best available algorithms show that it belongs to PSPACE,
i.e., it can be solved in polynomial space and exponential time
(this result is due to Canny \cite{Canny-pspace}; also see
\cite{BasuPollackRoy-book} for more recent algorithms).

We have seen that an instance of $\RECOG(\SEG)$ can be converted,
in polynomial time, into an equivalent instance of $\ETR$. This is usually
expressed by saying that $\RECOG(\SEG)$
\emph{reduces} to $\ETR$. A surprising and useful fact is
that, conversely, $\ETR$ reduces to $\RECOG(\SEG)$,
and thus these two problems are, up to polynomial-time
reductions, computationally equivalent. We will discuss this,
and related result, in the subsequent sections.

As an executive summary, one may remember that the first-order theory
of the reals can be decided in doubly-exponential time, and the
existential theory in singly-exponential time.

\heading{Remark: real-closed fields. }
Let us mention in passing that the first-order theory does not
fully describe the ordered field of the real numbers. Indeed,
there are other ordered fields, some of them quite interesting and
useful, that satisfy exactly the same set of sentences as the reals.
Such fields are called the \defi{real-closed fields}, and they
can be described by a simple axiomatic system (for which we refer
to the literature, e.g.,  \cite{BasuPollackRoy-book}).
In the literature, instead of the first-order theory of the reals,
one thus often speaks of the first-order theory of real-closed
fields.

One example of a real-closed field consists of all the \emph{algebraic}
real numbers. Another consists of all real \emph{Puiseux series};
these are formal series of fractional powers of the form
$\sum_{i=k}^\infty a_i \eps^{i/q}$, where $k\in\Z$, $q$ is a natural
number, $\eps$ is a formal variable, and the $a_i$
are real coefficients. 

Here we will not consider the Puiseux series in any detail. 
We just remark that they are
widely used  for \emph{perturbation arguments}, where one needs
to bring some semialgebraic sets into a suitably general position;
then the variable $\eps$ plays the role of an infinitesimal
quantity. A nice feature is that if we start with a real-closed
field and make the perturbation, we are again in a (larger) real-closed
field, and we can apply the same theory and
even the same algorithms, provided that
suitable computational primitives have been implemented.

\section{The complexity class $\exists\R$}

The class $\exists\R$ consists of the computational decision problems
that reduce to $\ETR$ in polynomial time. More precisely,
the reduction is the usual many-to-one reduction also used in the
definition of NP; that is, a decision problem $A$ many-to-one reduces
to a decision problem $B$ if there is a polynomial-time mapping $f$
such that $x$ is a YES-instance of $A$ iff $f(x)$ is a YES-instance of~$B$.


\heading{Restricted versions of $\ETR$ which are still complete
for $\exists\R$. }
The problem $\ETR$ is complete for $\exists\R$ by definition.
Here we will exhibit several restricted versions of $\ETR$ that
also turn out to be $\exists\R$-complete, although in some cases
this may be surprising. In the next section we will
discuss geometric $\exists\R$-complete problems. 

The significance of $\exists\R$-completeness is similar, on a smaller scale,
to that of NP-completeness: it bundles together many seemingly very different
problems into a single big complexity question.

The considered special cases of $\ETR$ all decide an existential
sentence 
\[(\exists X_1\ldots X_n)\Phi,
\]
 where $\Phi$ is
quantifier-free and of the following special forms:
\begin{itemize}
\item For the problem $\INEQ$, $\Phi$ is a conjunction of polynomial
equations and inequalities. Moreover, we require that the polynomials
in the equations and inequalities be written in the standard form,
as sums of monomials with integer coefficients. Thus, for example,
$(1+X_1)(1+X_2)\cdots(1+X_n)$ is not allowed.

\item $\STRICTINEQ$ is the special case of $\INEQ$
using only strict inequalities and no equations.

\item $\FEASIBLE$ is the special case of $\INEQ$
with a single polynomial equation $p(X_1,\ldots,X_n)=0$.
\end{itemize}

\begin{exercise} Formulate the problem of testing whether a given
graph belongs to $\SEG$ as an instance of $\STRICTINEQ$. 
Use Exercise~\ref{ex:pureseg} (which says that all segments
can be assumed to have distinct directions). 
\end{exercise}

We begin with showing that $\INEQ$, and even $\FEASIBLE$, are
$\exists\R$-complete. 

Even for $\INEQ$ this turns out to be trickier that one might think. 
One problem is with the exponential blow-up of formula size
when converting polynomials into the standard form, as was mentioned
in the preceding section. Another, similar problem appears when
dealing with an arbitrary Boolean formulas.

Fortunately, both of these problems can be solved by the same idea,
going back to Tseitin (in the context of Boolean satisfiability,
one speaks of the \defi{Tseitin transform} for converting 
Boolean formulas to \emph{conjunctive normal form}).
The main trick is to add new existentially quantified variables,
which are used to ``store'' the values of subformulas.

\begin{proposition}\label{p:feas-compl}
 The decision problems $\FEASIBLE$, and
consequently $\INEQ$, are $\exists\R$-complete.
In particular, solving a single polynomial equation in many variables over $\R$
is, up to a polynomial-time reduction,
as hard as solving a system of polynomial equations and inequalities
over~$\R$.
\end{proposition}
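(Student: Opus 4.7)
The membership $\FEASIBLE, \INEQ \in \exists\R$ is immediate since both are syntactic restrictions of $\ETR$, so the substance of the proposition is a polynomial-time reduction $\ETR \le_P \FEASIBLE$. The plan is to take an instance $(\exists X_1 \cdots X_n)\,\Phi$ and, through a sequence of equivalence-preserving transformations, reshape it into a single polynomial equation in standard form, at the cost of polynomially many new existentially quantified variables.

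The first step is a Tseitin-style flattening of the polynomial expressions: each intermediate subexpression (a binary sum or product) would be stored in a fresh variable defined by a constant-size equation in standard form. For instance, $(1+X_1)(1+X_2)\cdots(1+X_n)$ would be encoded by $Z_1 - 1 - X_1 = 0$ together with $Z_i - Z_{i-1}(1+X_i) = 0$ for $i = 2,\ldots,n$, avoiding the exponential blow-up from multiplying out; the same trick flattens repeated squarings such as $(\cdots(X^2)^2\cdots)^2$. Afterwards every atomic formula has the form $p \text{ rel } 0$ with $p$ of constant size in standard form. Next I would push all negations down to the atoms and eliminate every relation other than $=$ by the standard tricks $p \ge 0 \rightsquigarrow \exists Y\,(p - Y^2 = 0)$, $p > 0 \rightsquigarrow \exists Y\,(pY^2 - 1 = 0)$, and $p \ne 0 \rightsquigarrow \exists Y\,(pY - 1 = 0)$, with $\le$ and $<$ handled symmetrically. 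Pulling the new existentials to the front yields $(\exists \vec X')\,\Phi'$, where $\Phi'$ is built from $\wedge$ and $\vee$ only, applied to polynomial equations of constant size.

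The final reshape is a second Tseitin transformation, this time on the Boolean structure. For each subformula $\Psi$ of $\Phi'$ I would introduce a fresh variable $Y_\Psi$ intended to satisfy $Y_\Psi = 0 \Leftrightarrow \Psi$, defined by $Y_\Psi - p = 0$ for an atomic $\Psi = (p = 0)$, by $Y_\Psi - Y_1^2 - Y_2^2 = 0$ for $\Psi = \Psi_1 \wedge \Psi_2$ (using $p^2 + q^2 = 0 \Leftrightarrow p = q = 0$ over $\R$), and by $Y_\Psi - Y_1 Y_2 = 0$ for $\Psi = \Psi_1 \vee \Psi_2$. Appending the root equation $Y_\Phi = 0$ and combining the resulting polynomially many constant-size equations $q_1 = 0,\ldots,q_N = 0$ into the single equation $\sum_{i=1}^N q_i^2 = 0$ would produce the desired $\FEASIBLE$ instance of polynomial size.

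The main pitfall is the disciplined use of the two identities $p^2 + q^2 = 0 \Leftrightarrow p = q = 0$ and $pq = 0 \Leftrightarrow p = 0 \vee q = 0$, both of which are specific to the real numbers (they fail over $\mathbb C$); together with careful bookkeeping of the many auxiliary variables introduced along the way, they let one eliminate Boolean connectives and inequalities without the exponential blow-up that a direct DNF rewriting or monomial expansion would incur.
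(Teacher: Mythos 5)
Your reduction is correct and follows essentially the same route as the paper: Tseitin-style auxiliary variables to flatten both the arithmetic expressions and the Boolean structure, square-witness variables to turn $>$, $\ge$, $\neq$ into equations, the identities $p^2+q^2=0\Leftrightarrow(p=0\wedge q=0)$ and $pq=0\Leftrightarrow(p=0\vee q=0)$ to collapse the connectives, and a final sum of squares to reach a single equation. The only real difference is cosmetic: you first pass to negation normal form and use the convention ``$Y_\Psi=0$ means $\Psi$ is true,'' whereas the paper keeps negations in place and uses $\{0,1\}$-valued truth variables $W_\Psi$ with $W_{\neg\Psi}=1-W_\Psi$; both encodings work equally well for $\wedge,\vee,\neg$ (and both, as written, would need an extra idea to avoid blow-up on nested $\Leftrightarrow$, a case the paper defers to an exercise).
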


\begin{proof}
Let $(\exists X_1\ldots X_n)\Phi$ be an existential sentence
(in prenex form), of length $L$ and with $\Phi$ quantifier-free. 
We  transform it into
an equivalent instance of $\FEASIBLE$ of length $O(L)$.
The main trick is adding new real variables, which represent
the values of the various subformulas in~$\Phi$.

We consider the way of how $\Phi$ is recursively built from subformulas.
Here is an example, in which the subformulas are marked by lowercase Greek 
letters:
\[
\bigl(\underbrace{\underbrace{\underbrace{(\underbrace{X+Y}_{\vartheta})(\underbrace{Z-Y}_{\xi})}_{\zeta}>0}_{\alpha} 
\vee \underbrace{X\le 0}_{\beta}}_\gamma\bigr)\wedge
\underbrace{\neg(\underbrace{Y=0}_{\delta})}_{\eps}
\]
The subformulas $\vartheta,\xi,\zeta$ are \emph{arithmetic}, meaning that
they yield a real number, while $\alpha,\beta,\ldots$ are Boolean,
yielding true or false.

We process the subformulas one by one, starting from
the innermost ones and proceeding outwards. During this process, we 
build a new formula $\Gamma$, which is a conjunction of polynomial
equations. We start with $\Gamma$ empty, and we add equations to it
as we go along. We also add new existentially quantified real variables,
one or several per subformula.

In the example above, we start with the subformula $\vartheta$, we introduce
a new existentially quantified real variable $V_\vartheta$ 
(V abbreviating \emph{value}), and we add to $\Gamma$ the equation
$V_\vartheta=X+Y$. Similarly, for $\xi$ we add $V_\xi=Z-Y$, and
for $\zeta$ we add $V_\zeta=V_\vartheta V_\xi$. 
Thus, for each arithmetic subformula,
the corresponding variable  represents the value of the subformula.

For every Boolean subformula $\eta$, the plan is to introduce a corresponding 
real variable $W_\eta$, representing the truth value of $\eta$ in the
sense that $W_\eta=1$ means $\eta$ true and $W_\eta=0$ means $\eta$ false.

For our atomic formula $\alpha\equiv V_\zeta>0$, we begin with observing
that the strict inequality $V_\zeta>0$ is equivalent to
$(\exists S_\alpha)(V_\zeta S_\alpha^2=1)$, while its negation 
$V_\zeta\le 0$ is equivalent to $(\exists T_\alpha)(V_\zeta+T_\alpha^2=0)$.
Thus, with introducing the two auxiliary variables $S_\alpha$ and $T_\alpha$,
we obtain equations instead of inequalities. 

In order to have $W_\alpha$ represent the truth value of the 
subformula $\alpha$ as announced above, we would thus like to add to $\Gamma$
the  subformula
\begin{equation}\label{e:sadsubf}
(V_\zeta S_\alpha^2=1\wedge W_\alpha=1)\vee (V_\zeta+T_\alpha^2=0\wedge 
W_\alpha=0).
\end{equation}
Sadly, we cannot do that, since we are allowed to add only conjunctions
of equations, not disjunctions. But we observe that a conjunction
of equations $p=0\wedge q=0$ is equivalent to $p^2+q^2=0$, while a
disjunction $p=0\vee q=0$ can be replaced by $pq=0$. Thus, our desired
subformula \eqref{e:sadsubf} can be added in the following disguise:
\[
\big((V_\zeta S_\alpha^2-1)^2+(W_\alpha-1)^2\big)
\big( (V_\zeta+T_\alpha^2)^2+W_\alpha^2\big)=0
\]
Complicated as it may look, this equation still has
a constant length, even if we expand the polynomial
into the standard form.
This will be the case for every equation we add to~$\Gamma$.

The remaining two atomic formulas, $\beta$ and $\delta$, are handled analogously
and we leave this to the reader. Fortunately, the case of atomic formulas
is the most complicated one, and dealing with Boolean connectives is simpler.

The disjunction $\gamma\equiv\alpha\vee\beta$
is, by induction, equivalent to $W_\alpha=1\vee W_\beta=1$, and
the equation $W_\gamma=W_\alpha+W_\beta-W_\alpha W_\beta$ will do.
This, however, is a bit of a hack, relying on our particular representation
of the truth values. A more systematic approach is to first
represent the disjunction by $(W_\gamma=1\wedge (W_\alpha=1\vee W_\beta=1))
\vee (W_\gamma=0\wedge W_\alpha=0\wedge W_\beta=0)$, and then to
convert this formula into a single equation using the two tricks, 
$p^2+q^2$ and $pq$, as above.

The negation in the subformula $\eps$ is represented by $W_{\eps}=1-W_\delta$,
and for the conjunction in $\Phi\equiv\gamma\wedge\eps$
we can simply use $W_\Phi=W_\gamma W_\eps$.
Finally, we add the equation $W_\Phi=1$ to $\Gamma$, and this yields
the announced conjunction of equations $p_1=0\wedge\cdots \wedge p_m=0$
that is satisfiable iff the original formula $\Phi$ is. 

Since $\Phi$ has no more than $L$ subformulas, we have $m\le L$.
Each $p_i$ has length $O(1)$, and hence $\Gamma$ has length $O(L)$.
Finally, to get the desired instance of $\FEASIBLE$,
we replace $\Gamma$ with $p_1^2+\cdots+p^2_m=0$, which at most
doubles the formula length.
\end{proof}

\begin{exercise} In the above proof, we have not dealt with
a subformula of the form $\alpha\Leftrightarrow\beta$ (equivalence). 
Find a suitable
equation for replacing such a subformula.
\end{exercise}

%
%
%

\heading{Strict inequalities. }
Intuitively, the nature of $\STRICTINEQ$ seems to be different
from $\INEQ$. Indeed, if $\Phi(X_1,\ldots,X_n)$ is a conjunction
of strict inequalities, then the subset of $\R^n$ defined by it
is \emph{open}. Thus, for example, if it is nonempty, then it
always contains a point with rational coordinates; this, of course,
is not the case for equations, since, e.g., $X^2-2=0$ has only irrational
solutions.

Yet, as discovered in \cite{SchaeStef-Nash}, $\INEQ$
and $\STRICTINEQ$ are equivalent as computational problems.
For the proof we need a reasonably difficult
result of real algebraic geometry, of independent interest, which
we will take for granted.

\begin{theorem}\label{t:semibd} 
Let $S\subseteq\R^n$ be a semialgebraic
set defined by a quantifier-free formula of length $L$. If $S\ne\emptyset$,
then $S$ intersects the ball of radius
$R=2^{2^{CL\log L}}$ centered at $0$, were $C$ is a suitable absolute constant.
If $S$ is also bounded, then it is contained in that ball.
\end{theorem}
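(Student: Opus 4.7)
The plan is to reduce the statement to a classical quantitative bound on solutions of polynomial systems via the critical point method, and then invoke standard univariate root bounds of Cauchy/Mahler type.

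The first step is a normalization. Starting from the quantifier-free formula $\Phi$ of length $L$ defining $S$, one converts it to a conjunction of polynomial equations and inequalities, at the cost of introducing auxiliary real variables analogously to the Tseitin trick of Proposition~\ref{p:feas-compl} (but without any existential quantification: the new variables simply become extra coordinates, and $S$ is the projection of the resulting extended set). After expansion, all polynomials involved have at most $n'=O(L)$ variables and degrees bounded by $d=O(L)$, and their integer coefficients have absolute value at most $2^{\mathrm{poly}(L)}$.

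The second step applies the critical point method. On any nonempty connected component $C$ of the extended set, the function $\xx\mapsto\|\xx\|^2$ attains a minimum (and, when $S$ is bounded, also a maximum). A standard perturbation, replacing each strict inequality $p<0$ by $p+\eps=0$ along the active boundary and introducing a generic linear term to break degeneracies, exhibits such an extremum as an isolated real solution of a square system of $n'$ polynomial equations in $n'$ unknowns, of degree $O(L)$ and coefficient size $2^{\mathrm{poly}(L)}$.

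In the third step one eliminates variables by iterated resultants, producing for each coordinate of the extremum a nonzero univariate polynomial $q(T)\in\Z[T]$ of degree at most $d^{n'}=L^{O(L)}=2^{O(L\log L)}$ and coefficient bit-length at most $2^{O(L\log L)}$, since at each elimination stage the degree roughly squares while the bit-length grows only by a polynomial factor. Cauchy's bound on the moduli of real roots of $q$ then yields $|\xx^{\star}_i|\le R=2^{2^{CL\log L}}$ for a suitable constant $C$. Applied to any nonempty component this proves the first assertion; applied to every component (for both the minimum and the maximum of $\|\xx\|^2$) it proves the second.

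The main obstacle I expect is the perturbation step that delivers an \emph{isolated} zero of a \emph{square} polynomial system while keeping all degree and coefficient parameters under tight control: strict inequalities, non-smooth boundary pieces, singular loci, and positive-dimensional components all have to be dealt with simultaneously. The cleanest way to carry this out is to work over a real-closed extension such as the field of real Puiseux series alluded to in the paper's remark on real-closed fields, where $\eps$ genuinely behaves as an infinitesimal; this is where essentially all of the technical work in standard references (e.g., Basu--Pollack--Roy) is concentrated. Once the reduction to a zero-dimensional square system is in place, the resultant elimination and the final appeal to Cauchy's bound are routine arithmetic bookkeeping.
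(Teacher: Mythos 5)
The paper does not prove this theorem at all: it is explicitly ``taken for granted'' as a known result of real algebraic geometry, with pointers to \cite{GrigorievVorobjov88}, \cite[Theorem~4.1.1]{BPR-QE}, \cite[Theorem~13.14]{BasuPollackRoy-book}, and \cite[Theorem~6.2]{BasuVorobjov-homot}. So there is no in-paper argument to compare yours against; what you have written is an outline of the standard proof that lives in those references --- critical point method, infinitesimal deformation to a zero-dimensional square system, elimination, and a Cauchy-type root bound --- and at that level your strategy is the right one and the parameter bookkeeping ($d^{n'}=2^{O(L\log L)}$ for the degree and bit-length of the eliminant, hence $2^{2^{O(L\log L)}}$ for the root moduli) comes out consistent with the stated radius.

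As a self-contained proof, however, the proposal has a genuine gap exactly where you flag it. The deformation that turns an extremum of $\|\xx\|^2$ on an arbitrary (possibly singular, positive-dimensional, non-smooth-boundary) semialgebraic component into an isolated zero of a square system with controlled degrees and heights is not a ``standard perturbation'' that can be waved at; it is the entire content of the cited theorems, and getting it wrong typically costs an extra exponential. Saying that the clean way is to work over Puiseux series is naming the tool, not supplying the argument. Two smaller points also need attention. First, your Tseitin-style normalization makes $S$ a projection of an extended set $S'\subseteq\R^{n'}$, and auxiliary coordinates such as the $S_\alpha$ with $V S_\alpha^2=1$ can be unbounded even when $S$ is bounded; so for the containment assertion you must optimize $\sum_{i=1}^n X_i^2$ (the original coordinates only) over $S'$ rather than the full norm, and you should say so. Second, the theorem as stated asks that the ball meet $S$ itself (indeed every connected component of $S$), not merely every component of $S'$; since components of $S'$ project into components of $S$ and projection does not increase the norm, this is fine, but the one-line justification belongs in the proof.
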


A more refined bound for $R$ is $2^{\tau (\Delta+1)^{O(n)}}$,
where $n$ is the number of variables in the formula,
$\Delta$ is the maximum degree of the polynomials in it, and $\tau$
is the maximum number of bits of a coefficient in the polynomials.
The ball of the stated radius even intersects every connected
component of $S$ and contains every bounded connected component.
The bound in the theorem is obtained by substituting the trivial
estimates $\tau\le L$, $\Delta\le L$, $n\le L$ into the refined bound.

This kind of result goes back to \cite[Lemma~9]{GrigorievVorobjov88}
(which deals with a special semialgebraic set, namely, the zero
set of a single polynomial), and the result as above about a ball intersecting
all connected components is \cite[Theorem~4.1.1]{BPR-QE}
(also see \cite[Theorem~13.14]{BasuPollackRoy-book}).
A statement directly implying the part with the ball containing
all bounded components is  \cite[Theorem~6.2]{BasuVorobjov-homot}.

\begin{proposition}\label{p:strii-compl}
 $\STRICTINEQ$ is $\exists\R$-complete. Thus, solving a system of strict 
polynomial inequalities over $\R$ is, up to a polynomial-time reduction,
as hard as solving arbitrary system of polynomial inequalities.
\end{proposition}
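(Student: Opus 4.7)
The plan is to reduce $\FEASIBLE$, which is $\exists\R$-complete by Proposition~\ref{p:feas-compl}, to $\STRICTINEQ$; membership $\STRICTINEQ\in\exists\R$ is immediate since $\STRICTINEQ$ is a syntactic restriction of $\ETR$. Given an instance $p(X_1,\ldots,X_n)=0$ of $\FEASIBLE$ of length $L$, I would construct an equivalent conjunction of strict inequalities of length polynomial in $L$. The underlying idea is to replace the equation by the two strict conditions ``$X$ lies in a fixed ball $\|X\|^2<R^2$'' and ``$p(X)^2<\eps_0$'', where $R=2^{2^{O(L\log L)}}$ and $\eps_0=2^{-2^{O(L\log L)}}$ are chosen so that the ball is large enough to contain some zero of $p$ when one exists (by Theorem~\ref{t:semibd}), while $\eps_0$ is smaller than the positive minimum of $p^2$ on the closed ball whenever that minimum is positive.

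For the gap $\eps_0$, I would apply Theorem~\ref{t:semibd} to the one-dimensional semialgebraic set
\[
T=\{Y>0:(\exists X)(\|X\|^2\le R^2\wedge p(X)^2\cdot Y=1)\}.
\]
If $\mu^*:=\min_{\|X\|\le R}p(X)^2$ is positive (attained by compactness), then $1/\mu^*$ is the smallest element of $T$, and the bound in Theorem~\ref{t:semibd} (or its refinement in terms of degree, number of variables, and coefficient bitsize) gives $1/\mu^*\le 2^{2^{O(L\log L)}}$, so $\mu^*\ge\eps_0$ for a suitable $\eps_0$ of the claimed magnitude.

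The main technical obstacle is that the $\STRICTINEQ$ format requires integer coefficients written explicitly in binary, yet $R^2$ and $1/\eps_0$ are doubly exponential in $L$ and hence have exponentially many binary digits. To encode them succinctly, I would introduce $N=O(L\log L)$ auxiliary existentially quantified variables $Y_0,\ldots,Y_N$ together with the strict inequalities $Y_0>2$, $Y_0<3$, and $Y_{i+1}>Y_i^2$, $Y_{i+1}<Y_i^2+1$ for each $i<N$. These pin $Y_N$ to a narrow interval around $2.5^{2^N}$, and $Y_N$ can then be substituted for both $R^2$ (via $Y_N-X_1^2-\cdots-X_n^2>0$) and $1/\eps_0$ (via $1-Y_N\cdot p(X)^2>0$). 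The resulting $\STRICTINEQ$ instance has length polynomial in $L$.

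I expect the trickiest part to be verifying the reverse direction of correctness once the doubly-exponential constants have been replaced by the variable $Y_N$: since $Y_N$ ranges over its admissible interval, the effective ball radius and gap vary with $Y_N$, so one has to rule out the possibility that some admissible $Y_N$ jointly fulfills both strict inequalities even when $\{p=0\}$ is empty. I would handle this by padding the chain with a few extra squarings so that the smallest admissible $Y_N$ already exceeds $\max(R^2,1/\eps_0)$ by a comfortable margin, and if necessary by adjoining a short second chain of multiplications that amplifies the gap inequality to $1-Y_N^c\cdot p(X)^2>0$ for a polynomial-size exponent $c$, so as to offset a possible loss in the gap bound as the effective ball radius grows.
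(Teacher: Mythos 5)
Your overall plan is the same as the paper's: reduce $\FEASIBLE$ to $\STRICTINEQ$ by replacing $p=0$ with ``$\|X\|^2$ is below a doubly exponential bound'' and ``$p^2$ is below a doubly exponentially small gap,'' encode the huge constants by repeated-squaring chains, and use Theorem~\ref{t:semibd} for both directions. However, there is a genuine quantitative gap in your encoding. By letting a single quantity $Y_N\approx 2^{2^N}$ play both roles, you entangle the ball radius with the gap: enlarging $Y_N$ enlarges the ball, which can shrink $\mu^*=\min_{\|X\|^2\le Y_N}p(X)^2$, which in turn demands a larger $1/\eps_0$. Concretely, the auxiliary bounded set you need (the unprojected version of your $T$, with the chain written into its defining formula) has description length $K=\Theta(N+L)=\Theta(L\log L)$, so Theorem~\ref{t:semibd} only guarantees $1/\mu^*\le 2^{2^{CK\log K}}=2^{2^{\Theta(L(\log L)^2)}}$, while $Y_N^c\le 2^{O(c\cdot 2^N)}=2^{c\cdot 2^{O(L\log L)}}$. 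Matching these forces $c\ge 2^{\Omega(L(\log L)^2)}$, so neither ``a few extra squarings'' nor a polynomial-size exponent $c$ suffices; your proposed remedies fall short by a doubly exponential factor. The paper's resolution is to use \emph{two independent chains with opposite one-sided bounds}: a chain of length $k=\Theta(L\log L)$ bounded \emph{above} ($Y_1<4$, $Y_{i+1}<Y_i^2$, $\sum X_i^2<Y_k^2$), so the ball radius never exceeds a fixed $R=2^{2^k}$ regardless of everything else, and a separate chain of length $\ell=\Theta(L(\log L)^2)$ bounded \emph{below} ($Z_1>4$, $Z_{i+1}>Z_i^2$, $Z_\ell^2p^2<1$), so the gap never exceeds a fixed $\delta=2^{-2^\ell}$. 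The extra $\log L$ factor in $\ell$ exactly absorbs the $K\log K$ loss in Theorem~\ref{t:semibd}, and the one-sidedness breaks the circularity.

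A secondary issue: Theorem~\ref{t:semibd} applies to sets given by a \emph{quantifier-free} formula of bounded length, so you cannot invoke it for the projection $T=\{Y:(\exists X)\ldots\}$ (whose quantifier-free description is not length-controlled), nor for a formula containing $R^2$ written out in binary. Apply it instead to the full-dimensional set $\{(X,Y_1,\ldots,Y_k,Z):\text{chain}\wedge Z^2p(X)^2=1\}$ and read off the bound on the $Z$-coordinate, which is what the paper does. (Also, $1/\mu^*$ is the \emph{largest}, not smallest, element of your $T$, which is in fact what the ``contained in a ball'' half of the theorem controls.)
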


\begin{proof} We will reduce $\FEASIBLE$ to $\STRICTINEQ$.
Let the input formula for $\FEASIBLE$ ask for solvability of
$p(X_1,\ldots,X_n)=0$ and let its length be~$L$. 

By Theorem~\ref{t:semibd}, if $p(X_1,\ldots,X_n)=0$ is solvable,
then it has a solution $(x_1,\ldots,x_n)\in\R^n$
 with $\sum_{i=1}^n x_i^2<R^2$, where $R=2^{2^k}$ and $k=CL\log L$. 

We construct an instance of $\STRICTINEQ$ of the form
\[(\exists X_1\ldots X_n Y_1\ldots Y_k Z_1\ldots Z_\ell)\Phi,
\]
where $\ell=C_1L(\log L)^2$, with a sufficiently large constant $C_1$, and
$\Phi$ is the quantifier-free formula
\begin{eqnarray*}
&&Y_1> 0\wedge \cdots\wedge Y_k> 0\\
&&{}\wedge  Y_1<4 \wedge Y_2<Y_1^2\wedge\cdots\wedge Y_k<Y_{k-1}^2
\wedge X_1^2+\cdots+X_n^2<Y_k^2 \label{e:Phi12}\\
&&{}\wedge Z_1>4\wedge Z_2>Z_1^2\wedge\cdots\wedge Z_\ell>Z_{\ell-1}^2
\label{e:Phi3}\\
&&{}\wedge Z_\ell^2 p(X_1,\ldots,X_n)^2<1.\label{e:Phi4}
\end{eqnarray*}
The first two lines say that that
$\sum_{i=1}^n X_i^2<R^2$, and the last two lines
mean that $|p(X_1,\ldots,X_n)|<\delta:= 2^{-2^\ell}$.
The length of the formula is clearly bounded by a polynomial in $L$,
even if we convert all polynomials into the standard form.

We need to check that $\Phi$ is solvable iff $p(X_1,\ldots,X_n)=0$ is.
First, if $p(X_1,\ldots,X_n)=0$ is solvable, then there
is a solution $(x_1,\ldots,x_n)$ with $\sum_{i=1}^n x_i^2<R^2$,
and it can be extended to a solution of $\Phi$.

Conversely,
let us suppose that $p$ has no zero. Let us consider the semialgebraic
set  $S\subseteq\R^{n+k+1}$ given by the following formula 
$\Xi=
\Xi(X_1,\ldots,X_n$, $Y_1,\ldots,Y_k,Z)$:
\begin{eqnarray*}
&&Y_1> 0\wedge\cdots\wedge Y_k> 0\\
&& {}\wedge Y_1<4 \wedge Y_2<Y_1^2\wedge\cdots\wedge Y_k<Y_{k-1}^2\wedge X_1^2+\cdots+X_n^2<Y_k^2\\
&&{}\wedge Z^2 p(X_1,\ldots,X_n)^2 = 1.
\end{eqnarray*}
The length of $\Xi$ is  $K=O(k)$.
Since we assume that $p$ has no zeros, $S$ is a bounded set,
and hence by Theorem~\ref{t:semibd}, it is contained in the ball
of radius $2^{2^{CK\log K}}\le
2^{2^\ell}=1/\delta$, and in particular,
$|Z|\le 1/\delta$ in every solution. This implies that
for every $(x_1,\ldots,x_n)\in\R^n$ with $\sum_{i=1}^nx_i^2<R^2$,
we have $|p(x_1,\ldots,x_n)|\ge\delta$, and hence $\Phi$ has no
solution.
\end{proof}

\begin{exercise} 
Find a bivariate polynomial $p(X,Y)$ with integer
coefficients with $p(x,y)>0$ for all $(x,y)\in\R^2$ and with
$\inf_{\R^2} p(x,y)=0$.
\end{exercise}

\heading{Remark: size of segment representations and 
Oleinik--Petrovski\v{i}--Milnor--Thom. }
By reasoning very similar to the proof of Theorem~\ref{t:semibd},
it can also be shown that if $S$ is a nonempty semialgebraic set
defined by a formula of length $L$ that is a conjunction
of \emph{strict} inequalities, then $S$ contains a rational point
 whose coordinates are fractions of integers with $2^{O(L\log L)}$ bits
(a more refined bound in the notation below Theorem~\ref{t:semibd}
is $\tau\Delta^{O(n)}$; see \cite[Theorem~13.15]{BasuPollackRoy-book}). 

Using the refined bound on the ETR formula 
expressing $\RECOG(\SEG)$ given at the beginning of this chapter, 
we get that every segment graph on $n$ vertices has a segment representation 
in which the endpoint coordinates have $2^{O(n)}$ digits
(there are $4n$ variables, the polynomials are at most quadratic,
and $\tau=O(1)$). Hence Theorem~\ref{t:manydig} is tight up to
a multiplicative constant in the exponent.

 Theorem~\ref{t:semibd} is also  related to a result of
real algebraic geometry that proved truly fundamental for discrete
geometry, theoretical computer science, and other fields:
the Oleinik--Petrovski\v{i}--Milnor--Thom theorem (also often
called the Milnor--Thom theorem or Warren's theorem in the literature).
This is a result that bounds the maximum number of connected components
of semialgebraic sets. In a modern version, it can be stated
as follows: 

\begin{theorem}
Let $p_i(X_1,\ldots,X_n)$ be polynomials
of degree at most $\Delta$, $i=1,2\ldots,m$, and for every
sign vector $\sigma\in\{-1,0,+1\}^m$ let $S_\sigma\subseteq\R^n$
be defined as
\[ \Bigl\{(x_1,\ldots,x_n)\in\R^n:
\bigwedge_{i=1}^m \sgn p_i(x_1,\ldots,x_n)=\sigma_i\Bigr\}.
\]
 Then
for $m\ge n\ge 2$,
\[
\sum_{\sigma\in\{-1,0,+1\}^m}\#S_\sigma \le \left(\frac{50\Delta m}{n}\right)^{n},
\]
where $\#S$ denotes the number of connected components of~$S$.
\end{theorem}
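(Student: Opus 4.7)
The plan is to follow the classical Milnor--Thom strategy: first bound the number of connected components of a single real algebraic hypersurface via Morse theory, then pass from arbitrary sign conditions to a single hypersurface by a small perturbation, and finally extract the $n^{-n}$ factor through a combinatorial refinement of the Basu--Pollack--Roy type.

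First I would establish the base case: for a polynomial $q$ of degree $D$ in $n$ variables, both $\{q=0\}\subseteq\R^n$ and its complement have at most $D(2D-1)^{n-1}=O(D)^n$ connected components. The standard proof intersects with a large ball, replaces $\pm q$ by the Morse function $q\pm\eta\sum_{j=1}^n X_j^2$ for a generic small $\eta>0$, and counts nondegenerate critical points via Bezout applied to the gradient system $\partial q/\partial X_j+2\eta X_j=0$. Every connected component must contain at least one critical point of a coordinate height function, so the Bezout count transfers to a bound on the number of components.

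Next I would reduce the sign-condition problem to the single-hypersurface problem. For a small generic $\eps>0$ I replace each $p_i$ by $\tilde p_i:=p_i(p_i^2-\eps^2)$, of degree $3\Delta$, and set $Z:=\{\prod_{i=1}^m\tilde p_i=0\}$. A standard check shows that every connected component of every nonempty $S_\sigma$, $\sigma\in\{-1,0,+1\}^m$, contains at least one connected component of $\R^n\setminus Z$, because the shifted hypersurfaces $\{p_i=\pm\eps\}$ separate the zero loci $\{p_i=0\}$ from the surrounding strict sign regions. Applying the base case to $\prod_i\tilde p_i$, of degree $3\Delta m$, then yields a crude bound of order $(C\Delta m)^n$ on the total count, which already has the right shape but lacks the desired factor $n^{-n}$.

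The refinement producing $(50\Delta m/n)^n$ is the main obstacle, and here the real work lies. The key idea is that, in $\R^n$, at most $n$ of the polynomials $p_i$ can be simultaneously constraining at a typical boundary point of a sign cell; accordingly one reorganizes the component count as a sum over index subsets $T\subseteq\{1,\ldots,m\}$ with $|T|\le n$. For each such $T$ the Milnor-type bound is applied only to the product $\prod_{i\in T}\tilde p_i$ and to the variety it cuts out, contributing at most $O(\Delta)^n$. Summing and invoking the estimate $\binom{m}{\le n}\le(em/n)^n$ gives the target bound $(O(\Delta m/n))^n$, and careful tracking of constants yields the advertised factor $50$. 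The principal technical hurdle is maintaining transversality throughout the stratification by $T$ and handling degenerate intersections of the perturbed hypersurfaces, which is where the Basu--Pollack--Roy machinery is genuinely needed.
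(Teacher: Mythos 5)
The paper does not prove this theorem: it states the Oleinik--Petrovski\v{\i}--Milnor--Thom bound as a known background result and refers the reader to the literature, so there is no in-paper argument to compare yours against. Judged on its own, your proposal correctly identifies the classical strategy (Morse-theoretic/Bezout bound for one hypersurface, perturbation to handle sign conditions, then the Basu--Pollack--Roy refinement for the $n^{-n}$ factor), but as written it has two genuine gaps.

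First, the intermediate reduction is wrong as stated. You claim that every connected component of every nonempty $S_\sigma$ contains a connected component of $\R^n\setminus Z$ with $Z=\{\prod_i p_i(p_i^2-\eps^2)=0\}$. But if $\sigma_i=0$ for some $i$, then $S_\sigma\subseteq\{p_i=0\}\subseteq Z$, so such a component contains no point of $\R^n\setminus Z$ at all. Handling sign conditions with zero entries is a real part of the theorem (the statement sums over all $\sigma\in\{-1,0,+1\}^m$), and the standard treatments charge such components to connected components of algebraic sets of the form $\{\sum_{i\in T}\tilde p_i^2=0\}$ or to critical points on suitable strata, not to components of the complement. Second, and more importantly, the step that actually produces the factor $(m/n)^n$ --- charging each connected component of each realizable sign condition to a critical point associated with some index set $T$ with $|T|\le n$, and controlling the multiplicity of that charging --- is exactly the content of the theorem, and you explicitly defer it (``here the real work lies,'' ``where the Basu--Pollack--Roy machinery is genuinely needed''). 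The heuristic that at most $n$ polynomials are ``simultaneously constraining'' at a typical boundary point is the right intuition but does not by itself rule out degenerate incidences or double counting; without carrying out that argument (and without any actual tracking of the constant $50$), what you have is a correct road map rather than a proof.
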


\begin{exercise} Apply the theorem just stated to show that
there are at most $2^{O(n\log n)}$ nonisomorphic segment graphs 
on $n$ vertices.
\end{exercise}

Applications of the Oleinik--Petrovski\v{i}--Milnor--Thom theorem abound in
the literature, from simple ones as in the exercise to sophisticated
uses, and it is useful to be aware of this kind of result.

\section{Stretchability and the Mn\"ev universality theorem}\label{s:mnev}

Here is the main goal of this section.

\begin{theorem}\label{t:seg-compl} 
The problem $\RECOG(\SEG)$, recognizing
segment graphs, is $\exists\R$-complete.
\end{theorem}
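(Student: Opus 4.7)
The theorem has two halves, and membership in $\exists\R$ is essentially already done: the formula built from the $\INTS$ predicate in Section~2 is a polynomial-time reduction of $\RECOG(\SEG)$ to $\ETR$, so $\RECOG(\SEG)\in\exists\R$. The real content is $\exists\R$-hardness, and the strategy, signalled by the section title, is to funnel through stretchability of pseudoline arrangements via the Mn\"ev universality theorem.

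The plan is a two-stage reduction. First I would use Mn\"ev's universality theorem to obtain $\exists\R$-hardness of the stretchability problem $\RECOG(\mathrm{STRETCH})$: given a combinatorial (simple) pseudoline arrangement $\mathcal{A}$, decide whether it is realizable by an arrangement of straight lines with the same combinatorial type. Mn\"ev's theorem says that every primary semialgebraic set defined over $\Z$ is stably equivalent to the realization space of some rank-$3$ oriented matroid, and via the standard dictionary between rank-$3$ oriented matroids and pseudoline arrangements, this yields a polynomial-time reduction from $\ETR$ (or, more cleanly, from $\STRICTINEQ$ via Proposition~\ref{p:strii-compl}) to stretchability. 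To keep the exposition self-contained I would reduce $\STRICTINEQ$ directly: encode each variable $X_i$ as a ratio of cross-ratios computed from a constant-size "coordinatization" gadget of points and pseudolines (the classical von Staudt constructions for addition and multiplication), and chain these gadgets together, one per atomic inequality, so that a stretching of the resulting arrangement exists iff the input strict system is satisfiable. This is the hardest step conceptually, and it is where one really uses the doubly-exponential coordinate phenomenon of Theorem~\ref{t:manydig}; as a byproduct I would recover the $2^{\Omega(\sqrt{n})}$ bound promised earlier by using a Mn\"ev-style gadget that forces a doubly-exponential growth of coordinates.

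Second I would reduce $\RECOG(\mathrm{STRETCH})$ to $\RECOG(\SEG)$. Given a pseudoline arrangement $\mathcal{A}$ with pseudolines $p_1,\ldots,p_m$, the idea is to build a graph $G(\mathcal{A})$ each of whose segment representations forces $m$ designated segments $s_1,\ldots,s_m$ to lie on $m$ straight lines whose combinatorial crossing pattern is exactly that of $\mathcal{A}$. Each pseudoline $p_i$ is represented by a \emph{chain} of segments whose successive members are glued end-to-end and forced, by suitable triangle- and subdivision-type gadgets attached to frame segments (two or three segments in ``general position'' that serve as a rigid reference), to be collinear. Once collinearity is enforced, the crossing pattern between different chains is imposed by adding an edge $\{s_i,s_j\}$ whenever $p_i$ and $p_j$ cross, and forbidding edges between segments of non-adjacent faces of $\mathcal{A}$. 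The correctness direction "stretchable $\Rightarrow$ $G(\mathcal{A})\in\SEG$" is immediate: take the straight realization and chop each line into sufficiently many overlapping pieces. The converse, "$G(\mathcal{A})\in\SEG\Rightarrow\mathcal{A}$ stretchable", uses the collinearity gadgets to recover $m$ straight lines from the representation.

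The main obstacle is designing and verifying these collinearity-forcing gadgets; this is exactly the combinatorial heart of the argument. One has to show that the gadgets are rigid enough to prevent a segment representation from "cheating" by bending a chain at a crossing, while being flexible enough so that any true straight realization of $\mathcal{A}$ extends to a segment representation of $G(\mathcal{A})$. I expect to handle this by first reducing to the case of an arrangement in which all pseudolines are in convex position on a bounding frame (so chains have a well-defined cyclic order) and then using a local gadget at each intersection that consists of a constant-size subgraph whose only realizations, up to projective transformations of the frame, force the two chains through the intersection to be straight lines. Combined with the first stage this yields the desired polynomial-time reduction and hence Theorem~\ref{t:seg-compl}.
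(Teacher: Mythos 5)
Your membership argument and the overall top-level strategy (route the hardness through stretchability of arrangements, established via a Mn\"ev-type construction from $\STRICTINEQ$) do match the paper. But your second stage contains a step that cannot work as stated. You propose to represent each pseudoline by a chain of segments and to design gadgets that \emph{force} consecutive segments of a chain to be collinear, and your back-direction ("$G(\mathcal{A})\in\SEG\Rightarrow\mathcal{A}$ stretchable") explicitly relies on recovering straight lines from these collinear chains. No such gadget exists: as Exercise~\ref{ex:pureseg} records, every segment graph admits a representation in which no two segments lie on a common line, so any representation of $G(\mathcal{A})$ can be perturbed to break the collinearity of your chains while preserving the intersection graph. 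The same perturbation phenomenon is why one also cannot force three segments through a common point. The reduction in the paper (from Kratochv\'{\i}l--Matou\v{s}ek) avoids this entirely: each line of a \emph{simple} arrangement becomes a \emph{single} segment clipped to a bounding triangle, and an \emph{ordering gadget} of many short segments along each long segment forces only the combinatorial \emph{order} of the crossing points along it (up to reversal) --- a property that, unlike collinearity, is stable under perturbation. Straightness then comes for free because each pseudoline is one straight segment, which one extends to a full line after an affine normalization.

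This also exposes a gap in your first stage: because segment representations cannot force concurrences, the arrangement you hand to the $\SEG$-reduction must be \emph{simple}, whereas the von Staudt / cross-ratio constructions inevitably produce many collinear triples. The paper bridges this with a separate reduction from constructible to simple order types (Proposition~\ref{p:constr-simp}, the Las Vergnas trick of splitting each constructed point into a small cluster via lexicographic extensions), which your outline omits. You also gloss over what the paper identifies as Mn\"ev's main difficulty: chaining von Staudt gadgets "one per atomic inequality" only prescribes a \emph{partial} order type, since the relative order on the number line of points built in different gadgets depends on the unknown values of the variables; pinning down the full combinatorial description requires the device of many separate projective scales with linking gadgets (Theorem~\ref{t:mnevv}). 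Finally, the doubly-exponential coordinate bound of Theorem~\ref{t:manydig} is a byproduct of the construction, not an ingredient "used" in it.
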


As a by-product of the proof, we will also get a proof
of a weaker version of Theorem~\ref{t:manydig} (exponentially many
digits needed for a $\SEG$ representation), with $2^{\Omega(\sqrt n)}$ 
digits instead of $2^{\Omega(n)}$.

\heading{A sample of other geometric $\exists\R$-complete problems.}
Before starting our development, we list several other 
$\exists\R$-complete problems; many more such problems can be found
in the literature (see, e.g., \cite{SchaeStef-Nash,Schaefer-surv-exR}).
\begin{itemize}
\item Recognition of intersection graphs
of \emph{unit disks} in the plane \cite{DBLP:journals/dcg/KangM12}.
\item Recognition of $\CONV$, intersection graphs of convex sets in $\R^2$
\cite{Schaefer-surv-exR}.
\item Determining the \emph{rectilinear crossing number}
$\RCR(G)$ of a graph \cite{DBLP:journals/dcg/Bienstock91},
that is, the minimum possible number of edge crossings in
a drawing of $G$ in the plane, with edges drawn as straight
segments.
\item (The \emph{Steinitz problem}) 
Given a partially ordered set, determining whether
it is isomorphic to the set of all faces of a convex polytope 
ordered by inclusion (this partially ordered set
is called the  \emph{face lattice} of the polytope); see
\cite[Corollary~9.5.11]{Bjorner:Oriented}. 
\end{itemize}

\subsection{From segment graphs to line arrangements}

\heading{STRETCHABILITY. } 
Let us consider a set $L$ of $n$ lines in the plane.
The \defi{arrangement} of $L$ is the partition of the plane
into convex subsets induced by $L$. Four of such subsets are marked
in the next picture: a \emph{vertex}, which is an intersection of two lines,
two \emph{edges}, which are pieces of the lines delimited by the vertices,
and a \emph{region}, which is one of the pieces obtained after cutting the plane
along the lines.
\immfig{j-arrg}

We want to define a \emph{combinatorial description} of such a line arrangement,
the idea being that two arrangements with the same description look 
combinatorially the same. Such descriptions are
systematically studied in the theory of \emph{oriented matroids}
(see, e.g., \cite{Bjorner:Oriented}). 
There are several reasonable definitions of a combinatorial description,
all of them essentially equivalent.
We will use one that is simple and convenient for our present purposes, 
although not among the most conceptual ones.

For simplicity, let us assume that none of the lines of $L$ is vertical
and every two intersect. Then we number the lines $\ell_1,\ldots,\ell_n$
in the order of decreasing slopes, as in the above picture,
and for each $i=1,2,\ldots,n$, we write down the numbers of the lines
intersecting $\ell_i$ as we go from left to right. From the picture
we thus get the following five lists:
\[
({\scriptstyle{2\atop 5}},3,4),
({\scriptstyle{1\atop 5}}, {\scriptstyle{3\atop4}}),
(5,1,{\scriptstyle{2\atop 4}}),
(5,1,{\scriptstyle{2\atop 3}})
(4,3,{\scriptstyle{1\atop 2}}).
\]
Several numbers in a column mean that the corresponding intersections
coincide.
These $n$ lists constitute (our way of)
the combinatorial description of the arrangement of~$L$.

The decision problem \defi{STRETCHABILITY} can now be defined as follows: 
given $n$ lists of integers, 
with some of the entries arranged in columns as above, 
decide whether they constitute a combinatorial description
of an arrangement of $n$ lines. 

There are some obvious consistency conditions on the lists: for example,
the $i$th list should contain each element of $[n]\setminus\{i\}$
exactly once, and for $i<j<k$, if $\ell_i$ intersects $\ell_k$ before 
$\ell_j$, then $\ell_j$ must intersect $\ell_i$ before $\ell_k$,
etc. It is not hard to formulate conditions on the lists
so that they provide a combinatorial description of an arrangement
of \defi{pseudolines}. Here a set of $n$ curves is called
a set of (affine) pseudolines if every curve intersects every 
vertical line exactly once, and every two curves cross exactly once.

The real problem in STRETCHABILITY is in recognizing whether 
a combinatorial description of an arrangement of pseudolines
also fits to an arrangement of lines, i.e., if the pseudolines
can be ``stretched''---this is where the name comes from.
The following is a famous example of a non-stretchable arrangement:
\immfig{j-papp}
In this picture, all of the pseudolines are straight except for one
(and only a part of the arrangement not including all of the intersections
is shown, for space reasons). The non-stretchability relies on
an ancient theorem of Pappus, which asserts that if eight straight lines
intersect as indicated, then the line passing through $p$ and $r$
also has to contain~$q$.

An arrangement of lines is called \defi{simple} (or, sometimes,
\emph{uniform}) if no three lines have a point in common. 
The decision problem \defi{SIMPLE STRETCHABILITY}, a special case
of STRETCHABILITY, asks whether
given lists form the combinatorial description of a simple
arrangement of lines. 

In due time, we will see that $\STRICTINEQ$ reduces to 
SIMPLE STRETCHABILITY, and thus SIMPLE STRETCHABILITY is $\exists\R$-complete.
But here we deal with the following result from
\cite{km-igs-94}, the first step in the proof of 
Theorem~\ref{t:seg-compl}:

\begin{proposition} {\rm SIMPLE STRETCHABILITY} reduces to $\RECOG(\SEG)$;
i.e., recognizing segment graphs is no easier than testing 
simple stretchability.
\end{proposition}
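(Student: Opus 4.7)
The plan is to construct, given the combinatorial description $D$ of a simple pseudoline arrangement on pseudolines $\ell_1, \ldots, \ell_n$, a graph $G = G_D$ of polynomial size in $|D|$ such that $G \in \SEG$ if and only if $D$ is stretchable. The graph will have a designated ``main'' vertex $v_i$ for each pseudoline $\ell_i$; in any segment representation of $G$, the segment $s_i$ corresponding to $v_i$ is intended to lie along a straightening of $\ell_i$. For each pair $\{i,j\}$, the edge $v_iv_j$ is placed in $G$, reflecting that every two pseudolines cross. The nontrivial part of the construction is a system of \emph{gadgets}, built from auxiliary vertices and edges, whose purpose is to force the following properties of any segment representation: (i) the segments $s_1, \ldots, s_n$ lie on $n$ pairwise distinct lines and none of them degenerates; (ii) on each $s_i$, the crossings with the other $s_j$ appear in the left-to-right order prescribed by the $i$-th list of $D$; (iii) no three of these lines share a common point.

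For the easy direction, stretchability $\Rightarrow G \in \SEG$, one takes a stretching of $D$, truncates each line to a segment $s_i$ containing all $n-1$ intersection points in its relative interior, and realizes each auxiliary vertex of a gadget by a small segment placed in an appropriate open region of the arrangement. Checking that all prescribed adjacencies and nonadjacencies hold is routine.

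The main obstacle is the converse direction: $G \in \SEG$ must imply that the $s_i$, extended to full lines, give a line arrangement with combinatorial description $D$. Here the gadgets have to do real work. The ``line'' gadget attached to each $v_i$ needs to force $s_i$ to contain many intersections in convex position and so forbid it from degenerating or collapsing onto another $s_j$. For each consecutive pair of crossings on $\ell_i$, a \emph{betweenness gadget} is attached: a small set of vertices adjacent to exactly the right subset of main segments, so that its realization can fit into the planar picture only if the two crossings along $s_i$ occur in the desired cyclic order with a visible ``gap'' between them. Finally, simplicity gadgets ensure that the three vertices prescribed not to meet at a common point in $D$ cannot meet at a common point in the realization either. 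Designing these gadgets so they impose only the intended local constraints, and so they can actually be realized whenever $D$ is stretchable, is the most delicate part of the proof.

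Once properties (i)--(iii) are secured by the gadgetry, extending the segments $s_1, \ldots, s_n$ to full lines yields a simple straight-line arrangement whose combinatorial description coincides with $D$, proving stretchability. Since the construction of $G_D$ from $D$ is evidently polynomial (each line contributes $O(n)$ main adjacencies, and the betweenness and simplicity gadgets add at most $O(n^2)$ auxiliary vertices each of constant size), this yields the desired polynomial-time reduction from SIMPLE STRETCHABILITY to $\RECOG(\SEG)$.
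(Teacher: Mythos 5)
Your overall plan is the same as the paper's (a gadget-based reduction in which auxiliary segments force the crossing order along each main segment), but as written it has genuine gaps. The central one is that the entire technical content is deferred: you assert the existence of ``betweenness'' and ``simplicity'' gadgets with the right forcing properties, and yourself note that designing them ``is the most delicate part of the proof.'' That delicate part \emph{is} the proof. The paper exhibits a concrete \emph{ordering gadget} (a fan of short segments placed in a neighborhood of each main segment) and identifies the key lemma --- that in \emph{every} segment representation of $G$ this gadget forces the crossings along $s_i$ to appear in the prescribed linear order, up to reversal --- as the step requiring a real argument (with the details in Schaefer's survey). Your separate ``simplicity gadget'' is also suspect: adjacency constraints cannot directly forbid three segments from passing through a common point; in the actual construction, simplicity is a byproduct of the ordering gadget, whose separating segments force consecutive crossings on $s_i$ to be distinct.

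The second gap is global orientation. Any per-segment gadget can fix the order of crossings along $s_i$ only up to reversal, since a segment has no intrinsic left-to-right direction, and nothing in your construction ties the orientations of different segments together. A representation in which some lists come out reversed and others do not describes a \emph{different} combinatorial description, and its realizability does not obviously imply that of $D$. The paper resolves this by augmenting the arrangement with a vertical segment $v$ crossing all lines to the left of every intersection, together with two further segments forming a triangle $T$ enclosing all intersections; the position of $v$ in each list anchors the direction of that list, and after an affine transformation the only residual ambiguity is a single global mirror reflection, which preserves stretchability. Your reduction needs an analogous anchoring device (or an argument that independent reversals cannot occur) before the converse direction goes through.
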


\begin{proofhd}{Sketch of proof}
First we explain a construction of a segment representation of a suitable
graph from a given simple line arrangement.
We begin with adding a vertical segment $v$ intersecting
all the lines and lying left of all intersections in the arrangement.
We also add two other segments that together with $v$ enclose all of the
intersections in a triangle $T$. Then we shorten all of the lines
to segments ending outside~$T$:
\immfig{j-arrgseg}

Next, in a small neighborhood of each of the segments in this picture, 
including the newly added ones, we add an \emph{ordering gadget},
made of many segments, as is indicated below:
\immfig{j-ogadget}
The original segments are drawn thick. Let $G$ be the intersection
graph of the resulting set of segments.

We have presented the construction geometrically, but it can obviously be
``combinatorialized'' so that, given a combinatorial description,
which may or may not correspond to an actual line arrangement,
it produces a graph $G$. Clearly, if the description does correspond
to a line arrangement, then $G$ has a segment representation,
and it remains to prove the converse---if $G$ has a segment representation,
then there exists an arrangement with the given description. 

The key step in the proof is in showing that the ordering gadget
indeed forces the appropriate linear ordering of the intersections
along each of the original segments, up to a reversal. 
This is quite intuitive,
and for a detailed proof we refer to \cite{Schaefer-surv-exR}.

Once we know this, we can discard the segments of the ordering gadgets.
We make an affine transformation of the plane so that the segment $v$ is vertical
and the triangle $T$ is to the right of it. 
Then all intersections of the remaining
segments must be inside $T$, and these segments can be extended
to full lines. Then either the resulting line arrangement or its
upside-down mirror reflection conform to the given combinatorial
description.
\end{proofhd}

\subsection{From line arrangements to point configurations}

Our main goal now is showing $\exists\R$-completeness of SIMPLE
STRETCHABILITY. First we pass from line arrangements to
point configurations; these two settings are equivalent,
but point configurations appear more convenient for the subsequent
development. The passage relies on \emph{line-point duality},
a basic concept of projective geometry, and the material in this
section may be rather standard/easy for many readers.

For combinatorially describing a configuration of points in the plane,
we use the notion of \emph{order type} (which, under the name
of a \emph{chirotope}, is also one of the possible axiomatizations
of oriented matroids; see \cite{Bjorner:Oriented}). 

The order type
is defined for a \emph{sequence} $(p_1,p_2,\ldots,p_n)$ of points
in $\R^2$ (or in $\R^d$, but we will not need that). For an ordered
triple $(p,q,r)$ of points, we define the \emph{sign} depending
on the direction, left/straight/right,
 in which we turn when going from $p$ to $q$
and then to $r$:
\immfig{j-ori}
The sign can  also be defined algebraically,
 as the sign of the determinant of the matrix with rows $q-p$ and $r-p$.
We define a \emph{combinatorial order type} as a mapping
${[n]\choose 3}\to \{-1,0,+1\}$, and the
 \defi{order type} of a sequence $(p_1,p_2,\ldots,p_n)$ is the mapping
assigning to every triple $i<j<k$
the sign of $(p_i,p_j,p_k)$.

The \defi{ORDER-TYPE REALIZABILITY} problem has a combinatorial order type as
the input, and asks if it is the order type of an actual point sequence.
We also have the \defi{SIMPLE ORDER-TYPE REALIZABILITY} variant, 
where all the signs are $\pm 1$'s, or in other words, the
considered point set is in general position, with no collinear
triples. 

\begin{lemma}\label{l:otr}
{\rm SIMPLE ORDER-TYPE REALIZABILITY} reduces to {\rm SIMPLE STRETCHABILITY}.
\end{lemma}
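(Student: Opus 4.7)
The approach is classical line-point duality. To a point $p=(a,b)\in\R^2$ associate the dual line $\ell_p\colon y=ax-b$. Under this correspondence three points are collinear iff the three dual lines are concurrent, so an $n$-point set in general position dualizes to a simple arrangement of $n$ lines (all slopes distinct, no triple concurrent). Crucially, the slope of $\ell_p$ is the first coordinate of $p$, so listing the dual lines in order of decreasing slope is the same as listing the points in order of decreasing first coordinate, which is exactly the convention used by the combinatorial description in the excerpt. The $x$-coordinate of $\ell_i\cap\ell_j$ equals $(b_i-b_j)/(a_i-a_j)$, and for any fixed $i$ and two others $j,k\ne i$ the comparison ``$\ell_j$ crosses $\ell_i$ before $\ell_k$'' reduces, after tracking the sign of the factor $(a_i-a_j)(a_i-a_k)$, to the sign of the $3\times 3$ determinant with rows $(1,a_i,b_i),(1,a_j,b_j),(1,a_k,b_k)$, which is precisely $\chi(\{i,j,k\})$ up to a fixed sign depending only on the relative order of $i,j,k$.

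Given a simple combinatorial order type $\chi$ on $[n]$, I would first augment $\chi$ with a constant-size normalization gadget: a few additional labels whose prescribed orientations with respect to all original labels force, in any realization, the first coordinates of the original $n$ points to appear in the strictly decreasing order $a_1>a_2>\cdots>a_n$. This is arranged in a realization-preserving way via an affine frame together with a couple of anchor labels placed far to the right of all others, producing an augmented order type $\chi'$ on $N=n+O(1)$ labels that is realizable iff $\chi$ is and whose duals are automatically labelled in decreasing slope order. Then for each $i\in[N]$ I build the list $\lambda_i$ by sorting $[N]\setminus\{i\}$ according to the order in which the $\ell_j$ cross $\ell_i$, reading each pairwise comparison directly off the signs $\chi'(\{i,j,k\})$ via the determinant formula above. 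Since $\chi'$ is simple, every comparison is strict and the lists $(\lambda_1,\ldots,\lambda_N)$ contain no columns; the whole procedure runs in polynomial time.

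For the equivalence: if $\chi'$ is realized by points $p_1,\ldots,p_N$, their duals form a simple line arrangement whose combinatorial description is, by construction, exactly $(\lambda_1,\ldots,\lambda_N)$, so the lists are stretchable. Conversely, from any realizing line arrangement one reads off $(a_i,b_i)$ as the slope and negated $y$-intercept of the $i$-th line; the resulting $N$ points are in general position and have order type $\chi'$, which restricts to $\chi$ on the original labels. The main technical obstacle is engineering the normalization gadget: one has to check that the extra labels neither rule out nor spuriously introduce realizations, i.e., that every realization of $\chi$ can be brought by an affine transformation into one with the prescribed order of first coordinates together with the anchor labels in the right places, and conversely that the augmented lists cannot be stretched without corresponding to a genuine realization of $\chi$. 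This is a routine but careful oriented-matroid-style construction; once it is in place, both directions of the equivalence and the polynomial bound on the reduction's size follow immediately from the duality described above.
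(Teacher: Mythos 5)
Your duality $p=(a,b)\leftrightarrow y=ax-b$, the observation that slope order of dual lines equals the $x$-coordinate order of the points, and the determinant bookkeeping for turning ``$\ell_j$ crosses $\ell_i$ before $\ell_k$'' into a triple sign are all exactly what the paper uses. The gap is in your normalization gadget. You want to add $O(1)$ anchor labels forcing every realization to have $a_1>a_2>\cdots>a_n$, and you justify this by asserting that every realization of $\chi$ can be brought by an affine transformation into one with this prescribed $x$-order. That assertion is false for $n\ge 4$: an affine map turns the $x$-order into the order of the original points by some linear functional, and a configuration of $n$ points in general position admits only $n(n-1)$ such orders (its circular sequence), not $n!$. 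Concretely, if $\chi$ is realized by four points in convex position in the cyclic order $1,3,2,4$, then no direction orders them $1,2,3,4$, since that would require a line separating $\{p_1,p_2\}$ from $\{p_3,p_4\}$, which interleave on the hull; one checks that projective transformations do not produce this order either. For such a $\chi$ your augmented $\chi'$ would be unrealizable while $\chi$ is realizable, so the reduction would answer NO on a YES-instance. This is not a ``routine but careful'' verification to be deferred; it is the point where the construction breaks.

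The paper's fix is to not prescribe the order but to read it off from $\chi$ itself. Designate $p_n$ as the anchor: a projective transformation (which preserves the order type as long as all points stay on one side of the exceptional line) pushes $p_n$ far down the negative $y$-axis, and from that position the signs $\chi(\{i,j,n\})$ determine the comparison between $a_i$ and $a_j$ for the remaining points. Thus the slope order of the dual lines is whatever total order these triples define (if they fail to define a total order, $\chi$ is unrealizable and one outputs a fixed non-stretchable instance), no new labels are needed, and the output arrangement has $n-1$ lines, one per point other than $p_n$. With this replacement the rest of your argument --- computing the lists from the triple signs and the converse direction via dualizing a stretching back to points and re-appending $p_n$ --- goes through and coincides with the paper's proof.
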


\begin{proof}
As announced above, the proof is a simple
use of a line-point duality. In view of our way of combinatorially
describing line arrangements, we use the following version of duality:
To a point $p=(a,b)\in \R^2$, we assign the (nonvertical) line $\DD(p)$
with equation $y=ax-b$. Conversely, to every nonvertical line $\ell$,
which has a unique equation of the form $y=ax-b$, we assign the point 
$\DD(\ell)=(a,b)$. The basic property of $\DD$, quite easy to check
from the definition, is that a point
$p$ lies below, on, or above a nonvertical line $\ell$ iff
the dual point $\DD(\ell)$ lies below, on, or above the dual line $\DD(p)$.
In particular, $\DD$ preserves line-point incidences and thus maps
the intersection of two lines to the line spanned by the two dual points,
and vice versa.

Let $\tau$ be a simple combinatorial order type.
Supposing that there is a point sequence $(p_1,\ldots,p_n)$ realizing
it, we may assume that the point $p_n$ lies on the negative $y$-axis
very far below all the other points (``at $-\infty$'' for all practical
purposes). This can be achieved by a suitable \emph{projective transformation} 
of the plane; for readers not familiar with this concept we provide 
an express introduction
below.

With this assumption, it is easy to see that
we can read off the ordering of the $x$-coordinates
of $p_1$ through $p_{n-1}$ from the signs of the triples involving $p_n$.
Thus, if we consider the dual lines $\DD(p_1),\ldots,\DD(p_{n-1})$,
we know the ordering of their slopes.

We can also reconstruct the combinatorial description of the line arrangement
from the order type. For example, if $i<j<k$ and we want to know which
of the lines $\ell_i$, $\ell_j$ intersects $\ell_k$ first, it suffices
to know whether the intersection $\ell_i\cap\ell_j$ lies above or below
$\ell_k$. This in turn is equivalent to the dual point $\DD(\ell_k)$
lying above or below the line through $\DD(\ell_i)$ and $\DD(\ell_j)$,
by the properties of the duality mentioned above, and this last piece of
information is contained in the order type. A similar reasoning works
for other orderings of the indices $i,j,k$.

Conversely, the order type of $(p_1,\ldots,p_n)$ can be reconstructed
from the combinatorial description of the arrangement of 
the dual lines $\DD(p_1),\ldots,\DD(p_{n-1})$.

Summarizing, from the combinatorial order type $\tau$ we can construct, in polynomial time of course, a combinatorial description of an arrangement of $n-1$
lines that is stretchable iff $\tau$ is realizable.
\end{proof}

\heading{Projective transformations. } Above and in the sequel we need
some properties of projective transformations. For readers not familiar with
this topic we provide several introductory sentences, referring, e.g., to
Richter-Gebert \cite{richter2011perspectives} for a solid introduction
to projective geometry.

A projective transformation  
maps the plane onto itself, or more precisely, it maps the plane
minus an exceptional line onto the plane minus another exceptional line.
Here is a quite intuitive geometric way of thinking of projective 
transformations. We fix two planes $\pi$ and $\sigma$ in $\R^3$
and identify each of them with $\R^2$ by choosing a system of 
Cartesian coordinates in it. We also choose a point $o\not\in \pi\cup\sigma$,
and the projective transformation is obtained by projecting $\pi$ from $o$
into $\sigma$.

\immfig{j-projtf2}

The exceptional line in $\pi$ is labeled by $\ell$ in the picture;
it is the intersection of $\pi$ with the plane
through $o$ parallel to $\sigma$. In projective geometry, one completes
the plane by a line at infinity, to the \emph{projective plane},
and projective transformations map the projective planes bijectively
onto itself. In particular, the exceptional line $\ell$ is mapped
onto the line at infinity. 

A projective transform maps straight lines to straight lines and
preserves line-point incidences. It also preserves the order type,
provided that all of the points lie on the same side of the 
exceptional line.


\subsection{General position and constructible configurations}

So far we are carrying along the simplicity/general position requirement,
which came naturally from the setting of segment graphs---there we cannot
force three segments to meet at a single point.
For reducing the solvability of strict polynomial inequalities, $\STRICTINEQ$,
to a realizability problem for point configurations, the general position
requirement would be very inconvenient. Here we replace it with
a considerably weaker requirement, called constructibility, by means of
an  ingenious  trick, first used by  
Las Vergnas (see \cite[Prop.~8.6.3]{Bjorner:Oriented}).

We say that a point sequence $(p_1,p_2,\ldots,p_n)$ is 
\defi{constructible} if, possibly after renumbering the points
suitably, the following hold:
\begin{itemize}
\item No three among $p_1,\ldots,p_4$ are collinear (this is usually
expressed by saying that $p_1,\ldots,p_4$ form a \emph{projective basis}).
\item Each $p_i$, $i>4$, lies on at most two of the lines spanned
by $p_1,\ldots,p_{i-1}$.
\end{itemize}
In particular, every sequence in general position is constructible.

The notion of constructibility makes sense for a combinatorial
order type, realizable or not. In the corresponding
algorithmic problem  \defi{CONSTRUCTIBLE ORDER-TYPE REALIZABILITY},
the input is a combinatorial order type for which the points
are already ordered as in the definition of constructibility
(so that we need not worry about finding the right ordering
in polynomial time).

Intuitively, a constructible point configuration is one that can be constructed
from the initial four points using only a ruler, i.e., by passing lines
through pairs of already constructed points and placing each new point
either arbitrarily, or arbitrarily on an already constructed line, or to
the intersection of two already constructed lines. Such a ruler construction, 
though, only takes into account which triples of points should be collinear,
and not the signs of triples. Thus,  a constructible combinatorial order
type may still be unrealizable.

\begin{proposition}\label{p:constr-simp}
 {\rm CONSTRUCTIBLE ORDER-TYPE REALIZABILITY} 
 is reducible to 
{\rm SIMPLE ORDER-TYPE REALIZABILITY}.
\end{proposition}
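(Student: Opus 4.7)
My plan is to apply a variant of the Las Vergnas trick from oriented matroid theory: each point of the constructible order type $\tau$ that carries collinearity constraints is replaced by a pair of auxiliary ``twin'' points, positioned (combinatorially) on opposite sides of the constraining line(s). This produces a simple order type $\tau'$ on at most $2n$ points, and the reduction is clearly polynomial in the size of $\tau$.

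Concretely, I would process $p_5,p_6,\ldots,p_n$ in the constructibility order. For each $i>4$, let $C_i$ be the set of at most two lines $\ell_i^{(1)},\ell_i^{(2)}$ spanned by previously-placed points on which $p_i$ is required to lie (equivalently, the triples through $p_i$ of sign $0$ in $\tau$). I replace $p_i$ by twins $p_i^+$ and $p_i^-$, and set $\sgn(p_a,p_b,p_i^+)=+1$ and $\sgn(p_a,p_b,p_i^-)=-1$ for each constraining line $p_ap_b\in C_i$. For every other triple in $\tau$ involving $p_i$ (with previously-duplicated points consistently interpreted, say, as their ``$+$'' twin), I assign both twins the original sign from $\tau$. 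The result is a simple combinatorial order type $\tau'$ of polynomial size.

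For the ``only if'' direction, a realization of $\tau$ extends to one of $\tau'$ by perturbing each constrained $p_i$ a tiny distance into the prescribed diagonal quadrants; the open sign conditions for all other triples are preserved for sufficiently small perturbations. For the ``if'' direction, given a realization of $\tau'$, I would recover a realization of $\tau$ inductively in $i$: set $p_i:=\ell_i^{(1)}\cap\ell_i^{(2)}$ in the doubly-constrained case, $p_i$ equal to the crossing of $\ell_i^{(1)}$ with the segment $p_i^+p_i^-$ in the singly-constrained case, and $p_i:=p_i^+$ otherwise.

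The main obstacle is verifying, in the doubly-constrained case, that the chosen $p_i$ (the line intersection) inherits the non-constraining signs carried by its twins: a priori the twins could lie in opposite ``quadrants'' of the two lines without controlling the sign of the intersection against some other line through earlier points. A clean resolution is to work with Puiseux series over an infinitesimal $\eps$ (as noted in the paper's real-closed-fields remark): place the twins at $p_i\pm\eps\vv_i$ for suitable direction vectors $\vv_i$, run all sign computations symbolically, and transfer the realization back to $\R$ using Tarski's principle. An equivalent analytic framing is to argue by continuity that sliding the twins toward $p_i$ along curves staying inside the correct diagonal quadrants keeps every strict sign fixed, so in the limit the single point $p_i$ inherits the correct sign.
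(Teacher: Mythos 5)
The central step of your reduction --- replacing a doubly-constrained point by just \emph{two} twins in opposite diagonal quadrants --- has a gap that your own ``main obstacle'' paragraph identifies but does not close. The difficulty is in the ``if'' direction, where you are handed an \emph{arbitrary} realization of $\tau'$ and must show that $p_i:=\ell_i^{(1)}\cap\ell_i^{(2)}$ has the correct sign against every line $\tilde\ell$ spanned by earlier points. With two twins, their convex hull is a segment joining the $(+,+)$ and $(-,-)$ quadrants, and that segment need not contain the intersection point; hence $\tilde\ell$ can strictly separate $\ell_i^{(1)}\cap\ell_i^{(2)}$ from both twins even though both twins carry the sign prescribed by $\tau$. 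Your proposed fixes do not apply here: the Puiseux/infinitesimal placement and the ``slide the twins toward $p_i$'' continuity argument both presuppose that \emph{you} control where the twins sit (so they only prove the easy ``only if'' direction); in the ``if'' direction the realization of $\tau'$ is adversarial, and sliding its twins toward the intersection point may cross $\tilde\ell$ and flip exactly the sign you are trying to transfer. This is why the paper replaces a doubly-constrained point by \emph{four} points, one in each of the four sectors: the order type then forces them to form a convex quadrilateral containing $\ell_i^{(1)}\cap\ell_i^{(2)}$ in \emph{every} realization, so a line avoiding the quadrilateral has the intersection on the same side as all four replacements, while a line meeting the quadrilateral is shown (using that the replacements were placed very close to $p_t$) to correspond to a collinear triple of $\tau$ lying on one of the two defining lines. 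The singly-constrained case similarly uses three points whose triangle contains $p_t$.

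A second, smaller gap: the rule ``assign both twins the original sign from $\tau$'' does not fully specify $\tau'$. Triples involving two or three new points (for instance $(p_a,p_i^+,p_i^-)$, or twins coming from two different replaced points) have no counterpart in $\tau$, so as written you produce only a \emph{partial} order type, and realizability of some completion is a different problem. The paper resolves this with lexicographic extensions $[p_t,p_i^{\rho},p_k^{\sigma}]$, which determine all the new signs canonically from the combinatorial data alone --- essential because the reduction must also make sense when $\tau$ is not realizable.
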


\heading{The construction: first version. } To prove the proposition, 
given a constructible combinatorial order type $\tau$, 
we want to construct, in polynomial time, a simple combinatorial
order type $\xi$ such that $\xi$ is realizable iff $\tau$ is.

In order to make the proof more accessible, we first explain a somewhat
simplified version of the construction in terms of specific point
sequences, i.e., assuming that we are given a sequence $\pp$
realizing the given order type~$\tau$. 

Thus, let $\pp=(p_1,\ldots,p_n)$ be a constructible sequence
with order type $\tau$, with a numbering as in the definition of 
constructibility. We inductively construct sequences 
$\qq^{(n)}=\pp,\qq^{(n-1)},\ldots,\qq^{(4)}$. The first
$t$ points of $\qq^{(t)}$ are $p_1$ through $p_t$.  The
final product of this construction is $\qq=\qq^{(4)}$, which is
a sequence of fewer than $3n$ points
in general position.

For obtaining $\qq^{(t-1)}$ from $\qq^{(t)}$, we distinguish three
cases, depending on the number of the lines spanned by $p_1,\ldots,p_{t-1}$
that pass through~$p_t$ (there are at most two by constructibility):
\begin{enumerate} 
\item[(2)]  First suppose that $p_t$ lies on two such lines;
let they be the lines $p_ip_j$ and $p_kp_\ell$. They divide
the plane into four sectors. The sequence $\qq^{(t-1)}$  is obtained from  $\qq^{(t)}$
by replacing $p_t$ with four points $p_{t,1},\ldots,p_{t,4}$,
one in each of the sectors, lying very close
to $p_t$:
\immfig{j-pt2}
\item[(1)] If $p_t$ lies on a single line $p_ip_j$, we replace it
with three points $p_{t,1},p_{t,2},p_{t,3}$:
\immfig{j-pt1}
We first choose another point $p_k$, $k<t$, not lying on $p_ip_j$,
we place $p_{t,1}$ on the line $p_tp_k$, and then $p_{t,2}$ and $p_{t,3}$
are on the opposite side of $p_ip_j$ and on different sides of $p_tp_k$.
This makes sure that $p_t\in\conv(p_{t,1},p_{t,2},p_{t,3})$.
\item[(0)] If $p_t$ does not lie on any line spanned by $p_1$ through $p_{t-1}$,
then we simply set $\qq^{(t-1)}=\qq^{(t)}$.
\end{enumerate}
In both of the cases (2) and (1), we place the new points so that
they do not lie on any line spanned by the points 
 of~$\qq^{(t)}$ minus $p_t$. 
We also place them so close to $p_t$
that every line spanned by the points
of $\qq^{(t)}$  and not passing through $p_t$ has
them on the same side as~$p_t$. 

By induction, the only collinear triples in $\qq^{(t)}$
are those spanned by $p_1,\ldots,p_{t}$, and in particular,
$\qq=\qq^{(4)}$ is in general position. Here is the main property
of the construction.

\begin{lemma}\label{l:otrr} Given any sequence $\tilde\qq$ realizing the
order type $\xi$ of $\qq$, one can construct a sequence $\tilde\pp$
realizing the order type $\tau$ of the original sequence~$\pp$. 
\end{lemma}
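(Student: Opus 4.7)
The plan is to reverse the construction inductively: starting from $\tilde{\qq} = \tilde{\qq}^{(4)}$, build sequences $\tilde{\qq}^{(5)}, \ldots, \tilde{\qq}^{(n)} =: \tilde{\pp}$ by contracting one cluster at a time. To pass from $\tilde{\qq}^{(t-1)}$ to $\tilde{\qq}^{(t)}$, delete the cluster that had replaced $p_t$ and insert a single point $\tilde p_t$ defined as follows: in case~(0), take the (unique) existing point; in case~(2), let $\tilde p_t$ be the intersection of the lines $\tilde p_i \tilde p_j$ and $\tilde p_k \tilde p_\ell$; in case~(1), let $\tilde p_t$ be the intersection of $\tilde p_i \tilde p_j$ and $\tilde p_{t,1} \tilde p_k$, reusing the auxiliary point $p_k$ chosen in the forward construction. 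Then $\tilde{\pp} := \tilde{\qq}^{(n)}$.

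The main claim, proved by induction on $t$, is that the order type of $\tilde{\qq}^{(t)}$ coincides with that of $\qq^{(t)}$; the base case $t=4$ is the hypothesis on $\tilde{\qq}$, and $t=n$ gives the lemma. In the inductive step only the triples containing the newly added $\tilde p_t$ need attention. By constructibility $p_t$ lies on at most the two lines $p_i p_j$ and (in case~(2)) $p_k p_\ell$ spanned by earlier points, so all prescribed collinearities of $\tilde p_t$ hold by definition, and collinearity with any other pair lying on one of these lines follows from the inductive hypothesis. For non-collinear triples, it remains to show that $\tilde p_t$ lies on the side of $\tilde p_a \tilde p_b$ dictated by $\tau$.

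This rests on two ingredients. First, $\tilde p_t$ lies in the convex hull $C$ of its cluster: in case~(2), the four cluster points occupy the four distinct sectors cut by the two lines through $\tilde p_t$ (a property encoded in the signs of the triples $(\tilde p_{t,m}, \tilde p_i, \tilde p_j)$ and $(\tilde p_{t,m}, \tilde p_k, \tilde p_\ell)$, which match the original by induction), so the quadrilateral $C$ contains the sector apex $\tilde p_t$; in case~(1), the separations ($\tilde p_{t,1}$ opposite $\{\tilde p_{t,2}, \tilde p_{t,3}\}$ across line $\tilde p_i \tilde p_j$, and $\{\tilde p_{t,2}, \tilde p_{t,3}\}$ on opposite sides of $\tilde p_{t,1} \tilde p_k$) force $\tilde p_t$ strictly inside the triangle $\tilde p_{t,1} \tilde p_{t,2} \tilde p_{t,3}$. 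Second, the forward construction placed the cluster so close to $p_t$ that every line $p_a p_b$ not through $p_t$ keeps the whole cluster on the same side as $p_t$; this being a statement about signs of triples, it transfers to $\tilde{\qq}^{(t-1)}$ by induction. Combining these two, $\tilde p_t \in C$ lies on that same side of $\tilde p_a \tilde p_b$, which is precisely the side prescribed by $\tau$.

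The one subtlety I anticipate is that the intersections defining $\tilde p_t$ may fail to exist in the affine plane: the two lines could happen to be parallel in the given realization $\tilde{\qq}$, even though the corresponding lines in $\qq$ meet. Parallelism is an affine accident invisible to the order type. It is removed by applying to $\tilde{\qq}$, once and for all, a projective transformation whose line at infinity avoids all the $O(n)$ intersection points that appear during the recursion, and keeps every point of $\tilde{\qq}$ on one side of it. Such a transformation preserves the order type, after which every intersection used in the recursive definition is a genuine affine point and the induction above goes through unchanged.
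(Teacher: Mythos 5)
Your argument is correct and is essentially the paper's own proof: reverse the construction one cluster at a time, place $\tilde p_t$ at the intersection of the defining lines, use the convex position of the cluster (forced by the order type) to conclude $\tilde p_t$ lies in its hull, and transfer the ``cluster lies on $p_t$'s side of every line not through $p_t$'' condition via the order type --- the paper merely organizes the case split by whether the line $\tilde a\tilde b$ meets the quadrilateral rather than by collinearity of $(a,b,p_t)$ in $\qq^{(t)}$. The parallelism worry at the end is actually moot, since the order type already forces the two defining lines to cross (the cluster realizes all four sign patterns with respect to them, which parallel lines cannot produce), but your projective fix is harmless.
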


\begin{proof}
It suffices to check that if $\tilde\qq^{(t-1)}$ has 
the same order type as $\qq^{(t-1)}$,
 we can obtain $\tilde\qq^{(t)}$ with the same order type
as $\qq^{(t)}$. Let us assume that
$\qq^{(t-1)}$ was constructed from $\qq^{(t)}$ according to
 case (2), since case (1) is analogous and case (0) trivial.

It is clear how the desired $\tilde\qq^{(t)}$ should be
obtained from $\tilde\qq^{(t-1)}$:
by deleting the points $\tilde p_{t,1},\ldots,\tilde p_{t,4}$ and placing
$\tilde p_t$ to the intersection of the two lines $\tilde p_i \tilde p_j$
 and $\tilde p_k \tilde p_\ell$.
What needs to be checked is that the resulting
$\tilde\qq^{(t)}$ has the order type of $\qq^{(t)}$.

We need to consider only the signs of the triples involving $\tilde p_t$.
Let $\tilde a$ and $\tilde b$ be the other two points in such a triple,
and let $a$ and $b$ be the corresponding points in $\qq^{(t)}$.

The points $\tilde p_{t,1},\ldots,\tilde p_{t,4}$ form a convex
quadrilateral containing the intersection of the lines
$\tilde p_i \tilde p_j$ and $\tilde p_k \tilde p_\ell$,
since this information is specified by the order type of~$\qq^{(t-1)}$.

Thus, if the line $\tilde\ell$ spanned by $\tilde a$ and $\tilde b$ 
avoids the quadrilateral, then $\tilde p_t$, lying inside
the quadrilateral, is on the same side of $\tilde\ell$
as $\tilde p_{t,1},\ldots,\tilde p_{t,4}$, and so
the sign of the triple $(\tilde a,\tilde b,\tilde p_t)$ is 
the same as for the corresponding triple $(a,b,p_t)$ in~$\qq^{(t)}$.

If $\tilde\ell$ does intersect the quadrilateral, then the 
corresponding line for $\qq^{(t-1)}$ also intersects the corresponding
quadrilateral, and since $p_{t,1},\ldots,p_{t,4}$ were placed
sufficiently close to $p_t$,  the points $a,b,p_t$ must be
collinear. But the only collinear triples
in $\qq^{(t)}$ involving $p_t$ 
lie on the two lines $p_ip_j$ and $p_kp_\ell$ defining $p_t$.
Hence $\tilde a$ and $\tilde b$ lie on the corresponding line
for $\tilde\qq^{(t)}$ and form a collinear triple with $\tilde p_t$
as well.
\end{proof}

\heading{Lexicographic extensions. } 
It may seem that Proposition~\ref{p:constr-simp} is already proved, but 
there is still a problem we need to address. Namely, we need a construction
phrased solely in terms
of order types; that is, we want to construct the order type
$\tau^{(t)}$ of $\qq^{(t)}$ directly from the order type
of $\tau^{(t-1)}$ of $\qq^{(t-1)}$, without relying on a particular
realization of $\tau^{(t-1)}$. Indeed, the construction must also make
sense for non-realizable combinatorial order types.

The construction as presented above is not yet suitable for this purpose:
when choosing the new points $p_{t,1},p_{t,2},\ldots$, we have not specified
the order type fully, since the position of a line spanned by 
two new points among the old points is not determined, and similarly
for lines spanned by a new point and an old point. We thus need to
be more specific, and for the definition, we use the following general notion.

Let $\xx=(x_1,\ldots,x_m)$ be an arbitrary point sequence in $\R^2$, 
let $x_i,x_j,x_k$ be three non-collinear points of $\xx$, and let
$\rho,\sigma\in\{-1,+1\}$ be signs. For every $\eps>0$, we
consider the point $x=x(\eps)= x_i+\rho\eps(x_j-x_i)+ \sigma\eps^2(x_k-x_i)$.
That is, we move $x_i$ a bit towards $x_j$ (if $\rho=+1$)
or away from it, and an even much smaller
bit towards $x_k$ or away from it. Let us form a new
sequence of $m+1$ points by inserting $x(\eps)$ into $\xx$
after $x_i$. It is not hard to see that
for all sufficiently small $\eps>0$, the order type of
the new sequence is the same,
and crucially, it can be figured out from the order type of~$\xx$.
We call this new sequence a \defi{lexicographic extension} of $\xx$,
and we write the new point $x$ as $[x_i,x_j^\rho,x_k^\sigma]$.
Analogously we define $[x_i,x_j^\rho]$, with just one move.

\begin{proofof}{Proposition~\ref{p:constr-simp}}
Given a constructible combinatorial order type $\tau$, we
produce the simple combinatorial order type $\xi$ essentially
according to the construction above, only we specify the 
way of adding the new points in terms of lexicographic extensions.

Namely, the four new points in case (2) of the construction above
are obtained by four successive lexicographic extensions of
the current sequence, as
$p_{t,1}:= [p_t,p_i^+,p_k^+]$, $p_{t,2}:=[p_t,p_i^+,p_k^-]$,
$p_{t,3}:=[p_t,p_i^-,p_k^-]$, and $p_{t,4}:=[p_t,p_i^-,p_k^+]$.
Thus, a somewhat more realistic illustration to case (2) is this:
\immfig{j-pt3}
No illustration can be quite realistic, since the $\eps$'s should
actually be very small and decrease very fast with the successive
lexicographic extensions.

Similarly, in case (1), where $p_t$ lies on the line $p_ip_j$
and $p_k$ is another point not lying on that line,
we set
$p_{t,1}:=[p_t,p_k^+]$, 
$p_{t,2}:=[p_t,p_i^+,p_k^-]$, and $p_{t,3}:=[p_t,p_i^-,p_k^-]$.
Now the construction is fully specified in terms of order types.

If $\tau$ is realizable, then, clearly, $\xi$ is realizable (we just
perform the construction geometrically). If $\xi$ is realizable,
then $\tau$ is realizable by Lemma~\ref{l:otrr}.
\end{proofof}

\subsection{The key part: modeling $\STRICTINEQ$ by point configurations}

Here is the most demanding part in our chain of reductions.

\begin{theorem}\label{t:mnevv}
 $\STRICTINEQ$  reduces to {\rm CONSTRUCTIVE ORDER-TYPE REALIZABILITY}.
\end{theorem}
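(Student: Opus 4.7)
The plan is to encode an arbitrary $\STRICTINEQ$ instance as an incidence-and-sign pattern in a ruler construction carried out inside a projective plane, using the classical \emph{von Staudt constructions} for addition and multiplication. Each step of the construction adds only a constant number of new points, and each new point is placed either arbitrarily on one already-spanned line or at the intersection of two already-spanned lines; constructibility in the sense required by Proposition~\ref{p:constr-simp} is thus built in by design.

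First, I would rewrite the input formula, using the Tseitin-style trick from the proof of Proposition~\ref{p:feas-compl}, as a conjunction of atomic constraints of just four types: $X_k = X_i + X_j$, $X_k = X_i X_j$, $X_i = 1$, and $X_i > 0$. The new variables increase the length only by a linear factor. Next, I would fix four ``frame'' points $A_1,A_2,A_3,A_4$ in general position (using lexicographic extensions to place them), designate the line $\ell$ through $A_1$ and $A_2$ as the \emph{number line}, put $A_1$ in the role of $0$ and an additional point $U \in \ell$ in the role of $1$, and declare the line $\ell_\infty$ through $A_3,A_4$ to play the role of the line at infinity. For every variable $X_i$ I would then introduce a single point $P_i$ on $\ell$.

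For each arithmetic constraint, I would graft onto the existing configuration the corresponding von Staudt construction. For example, to encode $X_k = X_i + X_j$, the ruler construction involving $A_3$, $\ell_\infty$, and a few auxiliary intersection points produces a point on $\ell$ that geometrically equals the sum of the coordinates of $P_i$ and $P_j$; the constraint then becomes the incidence ``this constructed point coincides with $P_k$,'' which in terms of the order type is the vanishing of a certain triple sign. The multiplication construction works similarly, also invoking $U$. A strict inequality $X_i > 0$ translates directly: I specify the sign of a triple such as $(A_3, A_1, P_i)$, which fixes on which side of $A_1$ the point $P_i$ lies on $\ell$. Since each new auxiliary point is an intersection of exactly two lines already present, the entire order type $\xi$ is constructible, and the whole transformation runs in polynomial time.

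The main obstacle is twofold. First, one must phrase the von Staudt construction purely in terms of the combinatorial order type, so that it makes sense even when the input is not realizable, and then verify that the order type thus produced is realizable precisely when the original system has a solution; this rests on the standard but non-trivial fact that von Staudt's constructions correctly simulate addition and multiplication on any affine line of any projective plane coordinatized by a field. Second, one must guard against unwanted degeneracies: every auxiliary point not forced to lie on two specific lines must be \emph{generic} relative to all earlier points, which I would enforce by inserting lexicographic extensions at each step so as to pin down the full order type. Given these precautions, the equivalence ``$\xi$ realizable iff the original $\STRICTINEQ$ instance is satisfiable'' follows from the correctness of the von Staudt calculus together with the sign conditions encoded in the triple orientations.
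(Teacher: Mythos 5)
Your overall strategy---von Staudt constructions on a projective number line, with each auxiliary point lying on at most two already-spanned lines---is indeed the starting point of the paper's proof. But there is a genuine gap, and it is exactly the one the paper identifies as the crux of Mn\"ev's theorem: a (constructible) combinatorial order type must specify the sign of \emph{every} triple, and in particular the left-to-right order along $\ell$ of \emph{all} the points representing variables and intermediate values. That order depends on the unknown solution: whether the point representing $X_iX_j$ lies left or right of the point representing $X_a+X_b$ cannot be read off the formula. Your remedy---lexicographic extensions to make auxiliary points generic---does not apply to these points, because they are not freely placed; they are intersections of constructed lines with $\ell$, and their mutual order is forced by the arithmetic. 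What your construction actually yields is an instance of what the paper calls PARTIAL ORDER-TYPE REALIZABILITY (only some triple signs prescribed), and the paper explicitly notes that this weaker reduction does not give $\exists\R$-completeness. The paper's fix is to normalize so that all quantities lie in $(1,\infty)$ (substituting $X=X'-X''$ and splitting each $p>0$ into $p^+>p^-$ with nonnegative coefficients), to introduce additional values $V_{-\xi}$ and $V_{1/\xi}$, and then to use \emph{many} projective scales on $\ell$, all sharing $\pinfty$ but with separate points $\pzero_i,\pone_i$: each addition, each multiplication (run in ``inverted'' form using $1/X$), and each comparison lives in its own scale containing only a constant number of value points whose order is known a priori, with linking gadgets transferring values between scales. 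A related omission is that the gadgets' auxiliary points off $\ell$ must also have all their mutual triple signs determined, which the paper arranges by placing them on $\ell'$ at distances $D_1\ll D_2\ll\cdots$ from $\pinfty$ with separations $\eps_1\gg\eps_2\gg\cdots$.

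A smaller but real problem: encoding $X_k=X_i+X_j$ or $X_k=X_iX_j$ by requiring the constructed output point to \emph{coincide} with a pre-placed $P_k$ does not work---coincidence of two points is not expressible as the sign of a triple, and forcing $P_k$ onto the output of a construction places it on more than two previously spanned lines, destroying constructibility (the paper makes the analogous remark about modeling equalities such as $2X+1=XY^2$). The constructed output point should simply \emph{be} the representative of $X_k$; only the final strict comparisons are imposed as triple signs, and it is precisely the strictness of the inequalities that keeps the configuration constructible.
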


This result, in a somewhat different context,
was first achieved in a breakthrough by Mn\"ev (pronounce, approximately,
``Mnyoff'') \cite{Mnev-in-Rochlin}. A simplified argument was then given
by Shor \cite{Shor-stretch-hard}, and our presentation below is mostly
based on Richter-Gebert's clean treatment \cite{RichtGeb-on-Mnev}.
The same proof  method also works for reducing $\INEQ$ to 
ORDER-TYPE REALIZABILITY.

In $\STRICTINEQ$, we are given a conjunction of strict polynomial
inequalities.
The first idea is simple: simulate the evaluation
of the polynomials  by geometric constructions. 
If we represent real numbers by points on the $x$-axis, sums
and products can be constructed easily:
\immfig{j-supro1}
(The parallel-looking lines should be really parallel.)

The first caveat is that these constructions use parallel lines,
which cannot be forced by order type. Moreover, given a realization
of a given order type, we can always apply a projective transformation
and get another realization, so we should better use constructions invariant
under projective transformations, and our way of representing real
numbers should be invariant as well. 

\heading{Real numbers as cross-ratios. }
Both of these issues can be remedied. First, for representing quantities
we use cross-ratios, rather than lengths. We recall that if $a,b,c,d$
are points on a line, then their \defi{cross-ratio} (in this order)
is the quantity
\[
(a,b;c,d):=\frac{|a,c|\cdot |b,d|}{|a,d|\cdot |b,c|},
\]
where $|a,b|$ denotes the \emph{oriented} Euclidean distance from $a$ to $b$,
which is positive if $a$ precedes $b$ on the line and negative otherwise.
This assumes that an orientation of the line has been chosen, but the
cross-ratio does not depend on it. More significantly, the cross-ratio
 is invariant under projective transformations. 

We thus fix a line $\ell$ on which all quantities will be represented,
and choose three points labelled $\pzero$, $\pone$, and $\pinfty$ on $\ell$; 
we say that we have chosen a \defi{projective scale} on $\ell$
(we use boldface symbols for the points of a projective scale
in order to distinguish them from the usual meaning of $0$, $1$, $\infty$).
Then a fourth point $a$ on $\ell$ represents the real number
$(a,\pone;\pzero,\pinfty)$. 
\immfig{j-projsc}
If $\ell$ is the $x$-axis, $\pzero=(0,0)$, $\pone=(1,0)$,
 $a=(x,0)$,
and $\pinfty$ is at infinity (that is, we take the appropriate
limit in the cross-ratio), then $(a,\pone;\pzero,\pinfty)=x$, which explains
the notation.

The sum and product constructions above can be projectivized as well:
what used to be parallel lines become lines intersecting on
a distinguished line $\ell'$ (which substitutes the line at infinity).
We obtain the following \defi{von Staudt constructions}:
\immfig{j-supro2}
The points $x$, $y$, $x+y$ and $xy$ are labeled by the real values
they represent. The points $\pinfty$, $\pzero$, $\pone$, $x$, and $y$
are given, $a$ and $b$ are chosen arbitrarily on the line $\ell'$,
and the remaining lines and points are constructed as indicated.
The correctness of these constructions follows from the correctness
of the constructions above with parallel lines and from the invariance
of the cross-ratio under projective transformations.

\heading{Simulating a polynomial inequality. }
Let us consider a system of strict polynomial inequalities, which we would
like to model by an order type realizability problem. 

First, in order to
have control over the ordering of the points on the line $\ell$,
we want to work only with numbers greater than $1$, and so we substitute
each variable $X_j$ with the expression $X'_j-X''_j$,
where $X'_j$ and $X''_j$ are new variables. Then, whenever 
the resulting system has a solution, it also has one
in which all variables exceed~1. From now on, we assume
that our system has this property, but we return to calling the variables
$X_1,\ldots,X_n$. 

We also transform each strict inequality $p_i>0$ in our system
to  $p_i^+>p_i^-$, where both of $p_i^+$ and $p_i^-$ are polynomials
with nonnegative integer coefficients. 
Here it is useful to have the polynomials 
in the standard form, for otherwise, it would be difficult to do this
splitting. 

After these preparations, we can calculate only with numbers greater
than~1. Let us consider one of the inequalities, 
for example $2X+1>XY^2$. We place a point $x$ representing the value
of $X$ on $\ell$, and using a series of von Staudt constructions,
we successively construct points $v_2$, $v_{2X}$, and $v_{2X+1}$,
each representing the value of the corresponding subexpression.
In a similar way, we place $y$ representing the value of $Y$
and construct $v_{Y^2}$ and $v_{XY^2}$. Then we impose
the inequality $2X+1>XY^2$ by requiring $v_{2X+1}$ and $v_{XY^2}$
to have the appropriate order along $\ell$. 
(Here it is useful to
note that all the considered number-representing points on $\ell$ may be
required to lie on the same side of $\pone$, since we can always
move $\pinfty$ sufficiently far from $\pzero$.)
This ordering can be enforced by the order type of the point configuration,
by fixing one point not lying on $\ell$ and
considering the signs of triples consisting of this point plus
two points on~$\ell$.

This looks all very nice, until one realizes
that we are still far from solving our problem. What we can already do 
is this: Given an instance of $\STRICTINEQ$, we can produce in polynomial
time a collection of points, some of them lying on $\ell$
and representing real numbers, others auxiliary, coming from the
von Staudt constructions. Certain triples of these points are
required to be collinear, which  makes the von Staudt
constructions possible,
and the sign is prescribed for some triples, which enforces the inequalities.
Realizability of such a point configuration is equivalent
to satisfiability of the given instance of $\STRICTINEQ$.

But the catch is that we are far from knowing the order type of the
configuration. For specifying it completely, we would need to know,
for example, what is the order of $v_{2X}$ and $v_{Y^2}$ on $\ell$,
and all other relations of this kind. Moreover, unless we are careful,
the various von Staudt constructions can be intermixed with one another
in an uncontrollable fashion.
This problem is  serious, and overcoming it was Mn\"ev's main achievement.

\heading{Partial order-type realizability. }
In order not to lose optimism, we point out that we have already achieved
something: we have reduced $\STRICTINEQ$ to a
decision problem that we may call PARTIAL ORDER-TYPE REALIZABILITY,
in which the combinatorial order type is given only partially, by
specifying the signs for only some of the triples. 

Actually, our partial
order type is constructible in a suitable sense, since the von Staudt
constructions produce constructible configurations, and since the
inequalities we are modeling are strict. (Indeed, constructibility would fail
if we required equalities, such as $2X+1=XY^2$, since then 
the point $v_{2X+1}=v_{XY^2}$ would have to lie on three lines
spanned by previously constructed points.) 

This reduction is already sufficient to see that 
CONSTRUCTIBLE ORDER-TYPE REALIZABILITY cannot belong to the class NP unless 
$\STRICTINEQ$, and hence all problems in $\exists\R$, do.
Indeed, assuming that CONSTRUCTIBLE ORDER-TYPE REALIZABILITY is in NP
and given an instance of $\STRICTINEQ$, we set up an instance
of CONSTRUCTIBLE PARTIAL ORDER-TYPE REALIZABILITY as above,
and then we nondeterministically guess the missing signs so that
we get a fully specified constructible combinatorial order type. 
Then we apply the the nondeterministic polynomial-time 
decision algorithm for CONSTRUCTIBLE ORDER-TYPE REALIZABILITY
whose existence we assume, and this 
yield a nondeterministic polynomial-time algorithm for $\STRICTINEQ$.
By the reductions done earlier, we also get that $\RECOG(\SEG)$
and the other problems considered along the way are not in NP
unless $\exists\R\subseteq\mathrm{NP}$. However, we do not
get $\exists\R$-completeness in this way.

\begin{proofhd}{Segment representations with large coordinates:
proof of the weaker version of Theorem~\ref{t:manydig} } By the tools
developed so far, we can also obtain segment graphs requiring coordinates
with $2^{\Omega(\sqrt n\,)}$ digits.

Instead of modeling inequalities by a point configuration as above,
we simulate  repeated squaring by the von Staudt constructions, 
obtaining points $v_1,v_2,\ldots,v_k$ on $\ell$, 
with $v_i$ representing the number
$2^{2^i}$. Since there are no unknowns, the construction can actually
be executed with some concrete points. We obtain a specific constructible
configuration of $O(k)$ points in which some four collinear points have
cross-ratio $2^{2^k}$. Moreover, \emph{every} realization
of the order type of this configuration has such a fourtuple.

Then we go through the reductions made earlier, obtaining
first a simple order type, then a description of a line arrangement,
and finally a segment graph with $n=O(k^2)$ vertices. 
From a segment representation of this graph
we can get a realization of the order type we started with---we
just follow the proofs of correctness of the reductions.
If the segment representation has integer endpoint coordinates
with at most $M$ digits, then the lines in the corresponding arrangement
have equations with $O(M)$-digit coefficients,
 and the dual simple point configuration has coordinates with $O(M)$ digits,
which we may again assume to be integers. 

The passage from the simple point configuration 
to the constructible one, as in 
the proof of Lemma~\ref{l:otrr}, is more subtle, since here the
points of the constructible configuration cannot be assumed to have
integer coordinates. Instead, we look at the ratio of the largest distance to
the smallest distance determined by the points, and we observe that
the largest distance may only decrease and the smallest only increase
in the process. This is because we always delete some points and add
a point in their convex hull. Consequently, the distance ratio in
the resulting point configuration also has $O(M)$ digits, and so
do all cross-ratios. Therefore, $M\ge 2^{\Omega(k)}=2^{\Omega(\sqrt n\,)}$.
\end{proofhd}

\begin{proofhd}
{Separating the variables and constructions: proof of
Theorem~\ref{t:mnevv}} After this detour,
we return to our main task, reducing $\STRICTINEQ$ to CONSTRUCTIBLE ORDER-TYPE
REALIZABILITY. We still need to modify the construction presented
earlier so that the order type of the resulting configuration is
determined in full, \emph{without knowing the values of the variables}.

It turns out that the crucial task is fixing the order of the points
on $\ell$; separating the von Staudt constructions is then easier.
The idea is to use not one, but 
many projective scales on $\ell$; they all have the $\pinfty$ point
in common, but the $\pzero$'s and $\pone$'s are different:
\immfig{j-manyscales}
We assume, as we may,
 that the interval available to each scale is sufficiently long
so that all numbers that need to be represented in that scale fit there.

Each of the scales $(\pinfty,\pzero_i,\pone_i)$ has distinct purpose,
and we represent  only a small number of quantities (constants or
variables) in it, for which we know the ordering. For example, each 
operation of addition will have its own scale, and so will each
multiplication, as well as each comparison of two values. 
We will also introduce gadgets, similar to the von Staudt constructions,
that link the various scales, i.e., make sure that a variable represented
in two different scales has the same value in both.

Here by variables we mean both the original variables $X_1,\ldots,X_n$
from the formula and variables $V_\xi$, where $\xi$ runs through all
subexpressions to be evaluated. 
Moreover, and this is the final trick,
we also introduce variables $V_{-\xi}$ and $V_{1/\xi}$ for every
subexpression $\xi$, as well as $V_{-X_i}$ and $V_{1/X_i}$ for every 
variable~$X_i$. 

Let us see, for example, how we should process the inequality $XY+Z>Y$.
\begin{enumerate} 
\item We place $\pzero_1$ and $\pone_1$ defining scale~1.
Then we place a point $x$ representing $X$
to the right of $\pone_1$ (since, as we recall, all of the variables $X_i$
and subexpressions $\xi$ have values in $(1,\infty)$). Still in scale~1,
we construct a point $1/x$ representing $V_{1/X}=1/X$, 
by a suitable inversion gadget---see Fig.~\ref{f:j-pgadgets}.
\begin{figure}[tbp]
  \centering%
  {\includegraphics{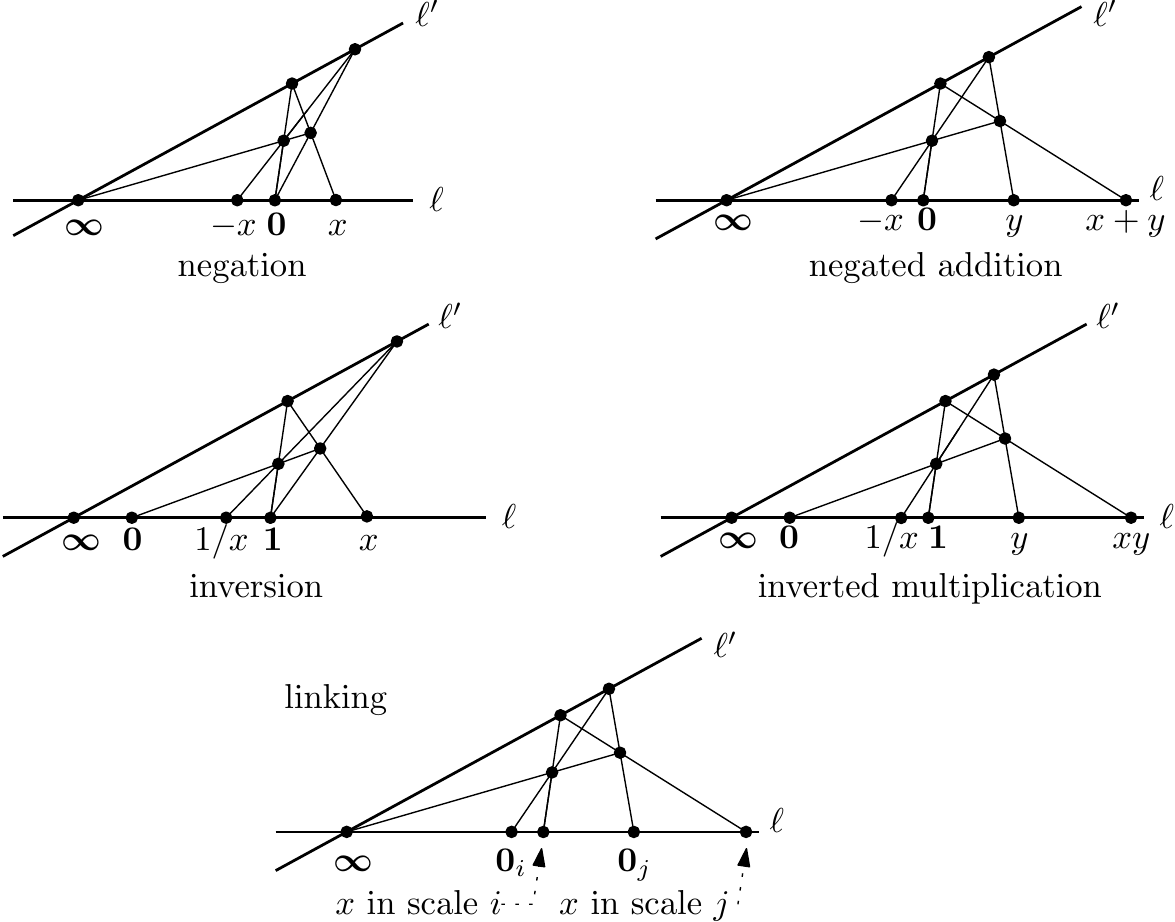}}
  \caption{The construction gadgets. \label{f:j-pgadgets}}
\end{figure}
\item We place $\pzero_2$ for scale~2, and obtain $\pone_2$
by transferring $\pone_1$ from scale~1 using a linking gadget.
Then in scale~$2$, we place $y$ representing $Y$, we transfer $V_{1/X}$ there
from scale~1 by a linking gadget, and by an inverted multiplication gadget
we produce a point representing $V_{XY}=XY$. The point is that we do not
care which of $X$ and $Y$ is larger, since they never appear
in the same scale; in scale~2, we have only $1/X$, $Y$, and $XY$, for which
the ordering is obvious in view of $X,Y>1$.
\item We initialize scale~3 by placing $\pzero_3$ and obtaining
$\pone_3$ by a linking gadget from scale~1 (or 2, both work).
We place a point representing $Z$ and construct
$-z$ representing $V_{-Z}=-Z$. 
\item We similarly initialize scale~4, and
by two linking gadgets, we transfer $V_{XY}$ and $V_{-Z}$ there.
By a negated addition gadget we construct
a point representing $V_{XY+Z}$ there.
\item We initialize scale~5,
we transfer $V_{XY+Z}$ and $Y$ to it, and we enforce $V_{XY+Z}>Y$ there.
\end{enumerate}

In this way, the ordering of the points on $\ell$ can be fixed,
and it remains to untangle the various gadgets. Each gadget
has four auxiliary points; let us call them $a_i,b_i,c_i,d_i$ for
the $i$th gadget used in the construction.

The points $a_i$ and $b_i$ lie on the line $\ell'$
through $\pinfty$, which is shared by all of the gadgets.
Their position on $\ell'$ can be chosen freely, and in order to
separate the gadgets, we place $a_i$ at distance $D_i$ from
$\pinfty$ and $b_i$ at distance $\eps_i$ from $a_i$,
where the sequence $D_1\ll D_2\ll \cdots$ increases extremely fast,
and $\eps_1\gg \eps_2\gg\cdots$ decreases even faster.

The points $c_i$ and $d_i$ are then determined by the ``input values''
of the gadget, and if $a_i$ and $b_i$ are sufficiently close,
$c_i$ and $d_i$ lie in an arbitrarily small neighborhood
of $a_i$ and $b_i$.
\immfig{j-placegad}

Since we know the order of the points on $\ell$, as well as the points
 where the six lines spanned by each fourtuple $a_i,\ldots,d_i$
intersect $\ell$, we know the sign of every triple involving
points on $\ell$ and auxiliary points from a single gadget.
By similar considerations, we can determine the sign of all of the
remaining types of triples (on $\ell$ $+$ gadget $+$ another gadget,
two same gadget $+$ another gadget, three different gadgets). 

In this way, we arrive at a fully specified and constructible 
combinatorial order type, whose realizability is equivalent to
the solvability of the given instance of $\STRICTINEQ$.
This concludes the proof of Theorem~\ref{t:mnevv}.
\end{proofhd}

\heading{The Mn\"ev universality theorem. } Mn\"ev's original result
did not deal with computational complexity---it was actually about a 
topological question. To formulate it, let us consider a
sequence $(p_1,p_2,\ldots,p_n)$ of $n$ points in the plane.
This sequence is specified by a list of the point coordinates,
i.e., $2n$ real numbers, and so we can regard it as a single point
in $\R^{2n}$. For a combinatorial order type $\tau$ of $n$ points,
we define the \defi{realization space} $\RR(\tau)\subseteq \R^{2n}$ as the set
of the points corresponding to realizations of~$\tau$.

In 1956, Ringel asked, in the equivalent language of line arrangements,
 whether the realization space has to be path-connected; in other words,
whether one realization of $\tau$ can always be continuously deformed into
any other, while keeping the order type $\tau$ along the way. (Strictly
speaking, we need to be somewhat careful about mirror reflection, which
may trivially disconnect the realization space as defined above into
two components---but one usually factors out affine transformations,
by fixing the position of three affinely independent points.)

Mn\"ev's universality theorem shows that the answer to Ringel's question
is no in the strongest possible sense: the realization space can be 
topologically as complicated as one may wish. It may have the ``shape''
of any prescribed semialgebraic set, or of any prescribed finite simplicial
complex; the appropriate term from topology for ``having the same shape''
here is being \emph{homotopy equivalent}. Mn\"ev's statement uses an even
stronger notion of \emph{stable equivalence}, for which we refer to
\cite{RichtGeb-on-Mnev}. 

Similar universality theorems also hold
for the realization spaces of other kinds of objects from $\exists\R$-complete
problems. Essentially, $\exists\R$-completeness and topological universality
theorems are just two ways of expressing a great intrinsic complexity
of a given class of objects. The proofs are also similar, although
in topology one has to watch out for other aspects than in computational
complexity.

\section{Quantifier elimination according to Muchnik}\label{s:muchnik}

Here we prove that the first-order theory of $\R$, and thus, in particular,
the existential theory of $\R$, are \emph{decidable}.
We will actually obtain a stronger result, referred to as 
\defi{quantifier elimination}. 

\begin{theorem}[Tarski  \cite{t-dmeag-51}]\label{t:tarski}
There is an algorithm accepting as an input a 
formula $\Psi$ of the first-order theory of the reals, 
which may contain quantifiers; in general it also
contains free variables $Y_1,\ldots,Y_n$, which we
write as $\Psi=\Psi(\bY)$, with $\bY=(Y_1,\ldots,Y_n)$. 
The algorithm outputs a \emph{quantifier-free}
formula $\Phi=\Phi(\bY)$ that is equivalent to $\Psi$;
that is, for every choice of $\yy\in\R^n$
we have $\Psi(\yy)\equiv\Phi(\yy)$.
\end{theorem}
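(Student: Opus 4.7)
The plan is to reduce Tarski's theorem to eliminating a single existential quantifier and then to describe an algorithm performing that elimination. First, put $\Psi$ in prenex form $(Q_1 X_1) \cdots (Q_k X_k) \Phi$, with $\Phi$ quantifier-free, and observe that $(\forall X)\Psi$ is equivalent to $\neg(\exists X)\neg\Psi$. So we can process quantifiers from the inside out, at each step replacing a subformula of the form $(\exists X) \Phi(X, \bY)$ with a quantifier-free equivalent $\Phi'(\bY)$. The whole problem thus reduces to: given a quantifier-free $\Phi(X, \bY)$, which is a Boolean combination of atomic formulas of the form $p_i(X, \bY)\,\mathrm{rel}\,0$ for polynomials $p_1, \ldots, p_m$, produce a quantifier-free equivalent of $(\exists X)\Phi(X, \bY)$.

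The strategy for this elimination step is a finite case analysis driven by sign vectors. For every fixed $\yy \in \R^n$, set $q_i(X) := p_i(X, \yy)$. The truth of $\Phi(X, \yy)$ depends only on the sign vector $\sigma(X) := (\sgn q_1(X), \ldots, \sgn q_m(X))$, which, as $X$ varies over $\R$, is piecewise constant and changes only at real roots of the $q_i$. Hence $(\exists X)\Phi(X, \yy)$ holds iff some sign vector attained by the $q_i$ on $\R$ satisfies the underlying Boolean predicate. The task therefore becomes: construct a finite list of \emph{test values} $t_1(\bY), \ldots, t_N(\bY)$, each definable by a quantifier-free formula in $\bY$, such that every sign vector attained by the $q_i$ on $\R$ is already attained at some $t_k(\yy)$. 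Given such test values, one substitutes each $t_k$ into $\Phi$; since $t_k$ satisfies some polynomial relation with coefficients in $\bY$, the expression $\Phi(t_k(\bY), \bY)$ reduces to a Boolean combination of polynomial sign conditions in $\bY$ alone, and the disjunction over $k$ is the desired quantifier-free formula.

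Muchnik's idea for constructing the test values is a recursion on the $X$-degree of the family $\{p_i\}$. One first case-splits on whether the leading coefficients of the $p_i$ in $X$ — themselves polynomials in $\bY$ — vanish; if some leading coefficient vanishes, the effective $X$-degree drops and the remaining problem is smaller. In the generic case where all leading coefficients are nonzero, one adjoins to the family the derivatives $\partial p_i/\partial X$, whose $X$-degrees are strictly smaller, together with "boundary" values encoded by the signs of the leading coefficients (accounting for the behavior at $\pm\infty$). Critical points of the original $p_i$ are then captured inductively by the sign vectors of the enlarged derivative family, plus a bounded number of auxiliary "between roots" samples. Each test value $t_k(\bY)$ ends up being either a boundary value or a root of a polynomial whose coefficients are polynomials in $\bY$, so substituting and reducing modulo that polynomial keeps everything quantifier-free over $\bY$.

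The main obstacle will be designing the test set so that it is simultaneously (i) \emph{complete}, hitting every sign vector attained on $\R$, (ii) \emph{quantifier-free definable} from $\bY$, and (iii) \emph{small enough} that the recursion terminates with controlled blow-up. The termination argument hinges on a complexity measure (degree plus number of polynomials) that strictly decreases in each recursive call. Carefully bookkeeping the number of case splits and the sizes of the resulting subformulas then yields a running time of at most $2^{2^{O(L)}}$ on inputs of length $L$, matching the bound announced at the end of Section~\ref{s:muchnik}.
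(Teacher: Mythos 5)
Your overall architecture matches the paper's: prenex form, inside-out elimination of one existential quantifier at a time via $(\forall X)\equiv\neg(\exists X)\neg$, reduction to tracking the sign vectors of $p_1,\ldots,p_m$ as functions of $X$, and a recursion that adjoins derivatives and case-splits on the coefficient polynomials in $\bY$. But the core of the argument is missing, and the way you have framed it cannot work as stated. You posit ``test values'' $t_k(\bY)$, each ``definable by a quantifier-free formula in $\bY$,'' into which $\Phi$ is substituted. The relevant test values are roots of the $p_i(\cdot,\yy)$ (and of their derivatives), and these are \emph{not} quantifier-free definable functions of $\bY$ in the language of ordered fields -- there are no radicals, let alone closed forms for roots of degree $\ge 5$. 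Nor does ``reducing modulo that polynomial'' by itself resolve this: if $t$ is a root of $p_i$ and you reduce $p_j$ to $r=p_j\bmod p_i$, you still need $\sgn r(t)$, and $t$ is still unknown. The missing idea is that one must never compute or define the roots at all, only the discrete sign data at them, and that this requires closing the family of polynomials under \emph{both} derivatives \emph{and} remainders (not just derivatives, as in your sketch): then $\sgn p_j(\alpha)=\sgn r(\alpha)$ for $r=p_j\bmod p_i$ of strictly smaller degree, so the sign at the root $\alpha$ of $p_i$ is read off from a table already built for lower-degree members of the closed family. Processing the closure $\overline\PP$ in order of increasing degree makes this induction close up, and also guarantees (via the derivatives) that each new polynomial is monotone between consecutive old boundaries, so its roots are located by sign changes alone. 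Your termination measure is also off: the number of polynomials grows substantially under the remainder closure (its finiteness is itself something to prove); what decreases is the degree.

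The second gap is in how the multivariate answer is assembled. Rather than substituting test points, the paper runs the univariate sign-table algorithm \emph{symbolically} on polynomials in $X$ whose coefficients are polynomials in $\bY$, and at every sign test of such a coefficient expression it branches into all three outcomes, producing a ternary computation tree. The output formula is the disjunction, over leaves labeled TRUE, of the conjunction of the polynomial sign conditions in $\bY$ along the corresponding root--leaf path (with rational functions handled by testing numerator and denominator separately). This branching is exactly what replaces your unresolved obstacle (ii): nothing about $X$ is ever ``defined'' in terms of $\bY$; only sign conditions on polynomials in $\bY$ ever appear in the output. Your items (i)--(iii) are correctly identified as the crux, but as written they are a statement of the problem, not a proof.
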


Geometrically, this result tells us that every subset of $\R^n$ that
can be described in the first-order theory of $\R$ is semialgebraic,
i.e., can be specified by a quantifier-free formula. Even more
geometrically, the essence of the quantifier-elimination result
is that the projection of a semialgebraic set onto a coordinate
subspace is again semialgebraic.

The running time bounds for the algorithm presented below are actually poor, 
much worse than for the best known algorithms, and we will not care 
about them. However, the algorithm is relatively simple and it exhibits
some of the ideas also appearing in more sophisticated algorithms.

The main ideas of this algorithm are due to Muchnik (unpublished);
our presentation is mostly based on \cite{MichauxOzturk}, also drawing
inspiration from the blog~\cite{bhatta-blog}.

We assume that the given formula $\Psi$ is in prenex form, with all quantifiers
on the outside. We eliminate the quantifiers one by one starting from inside.
A universal quantifier is converted into an existential one
using $(\forall X)\Phi\equiv \neg(\exists X)\neg\Phi$.
Thus, it suffices to describe a procedure for eliminating the single
existential quantifier from a formula $\Psi$ of the form
\[
(\exists X)F(A_1,\ldots,A_m),
\]
where $F$ is a Boolean formula, each $A_i$ is an atomic predicate
of the form $p_i(X,\bY)~\mathrm{rel}~0$  for some 
polynomial $p_i$ with integer coefficients, and $\bY=(Y_1,\ldots,Y_n)$
is the vector of the free variables of~$\Psi$.

\subsection{The univariate case } 

In order to explain the method,
it is instructive to start with the case $n=0$, where $\Psi$
is a sentence
(no free variables) and we deal with univariate polynomials
$p_1(X),\ldots,p_m(X)$. 

For $x\in\R$, let the \defi{sign vector} of $(p_1,\ldots,p_m)$ at $x$
be
\[
\bigl(\sgn p_1(x),\sgn p_2(x),\ldots,\sgn p_m(x)\bigr)\in\{-1,0,1\}^m.
\]
To decide the validity of the sentence $\Psi$, we want to know
whether there is an $x\in\R$ at which the sign vector attains
one of the values allowed by the formula. For example,
for the formula
\[
(\exists X)\bigl(p_1(X)>0\wedge (p_2(X)>0\vee p_3(X)=0)\bigr),
\]
the allowed sign vectors are $(1,1,*)$ and $(1,*,0)$, where
$*$ means ``arbitrary value''.

The algorithm actually 
computes \emph{all} possible sign vectors of $(p_1,\ldots,p_m)$;
this is a common feature of practically all known algorithms for 
quantifier elimination or for deciding $\ETR$. Thus, the only way the
Boolean formula $F$ enters the computation is in checking if any
of the resulting sign vectors is allowed by the formula.

\heading{The sign table. }
Let us consider an example with three polynomials
$p_1(X)=4-X^2$, $p_2(X)=X^3-2X^2-X+2$, and $p_3(X)=-X^3+5X^2-6X$. Their
graphs are plotted in the next picture, 
\immfig{j-polys}
and below the picture we have the \defi{sign table} of the ordered
triple $(p_1,p_2,p_3)$. To make the sign table, we partition the real line
into maximal subintervals so that each $p_i$ has a constant sign on each
of the subintervals. Thus, open subintervals alternate with one-point subintervals;
we call the latter ones the \defi{boundaries}.
Each subinterval $I$ is then labeled by the sign vector of the $p_i$
at an (arbitrary) point $x\in I$, and the sign table is the sequence
of these sign vectors ordered from left to right.
We stress that the sign table is \emph{only} the sequence of sign vectors,
and it does not include the numerical values of the boundaries
(which are roots of the~$p_i$). 

We want to compute the sign table, which is of course sufficient
to decide the given sentence $\Psi$, and for reasons
which will become apparent later, we want to do so
\emph{without} computing the roots of our polynomials 
(although the sign table provides the number of roots
of each $p_i$ and their relative positions).

The algorithm proceeds incrementally.
Having computed the sign table for $(p_1,p_2,\ldots,p_{k-1})$, we want to
extend it to the sign table for $(p_1,p_2,\ldots,p_k)$. Conceptually,
we divide this into two steps:
\begin{enumerate}
\item[(i)] We compute the sign of $p_k$ at the roots of $p_1,\ldots,p_{k-1}$,
i.e., at the boundaries of the old sign table.
\item[(ii)] We locate the roots of $p_k$ among the old boundaries, 
and we add new boundaries and sign vectors accordingly.
\end{enumerate}
Both of the steps should be far from obvious for now. But 
let us assume for a moment that Step~(i) has been accomplished somehow.
Thus, for example, we have the sign table for $(p_1,p_2,p_3)$
as in the example above, and we also have the signs of another
polynomial $p_4$ at the boundaries:
\immfig{j-sigtab}  
What should be the columns of the new sign table, after
adding $p_4$, between the two boundary columns $a$ and $b$ of the old table? 
Since $p_4(a)<0$ and $p_4(b)>0$, there must be a root of $p_4$ in
the interval $(a,b)$, so a reasonable guess at the new columns is
\immfig{j-sigtab1}
This need not be correct, though, since $p_4$ may have \emph{several}
roots in $(a,b)$. But if it does, then it must have a local minimum there,
and hence the derivative $p'_4$ has a root in $(a,b)$. 

So if we could make sure that all roots of $p'_4$ are among the boundaries
of the old sign table, our guess at the new columns as above would 
be correct. Under the same assumption, for the next subinterval $(b,c)$, 
from $p_4(b)>0$ and $p_4(c)>0$ 
we would get that $p_4>0$ on $(b,c)$. Similarly we could infer all of the new
sign table. (There is a detail that we have skipped---we also need
the signs of $p_4$ ``at $-\infty$'' and ``at $+\infty$,'' i.e.,
for all sufficiently small and sufficiently large $x$, for otherwise,
we would not be able to detect roots of $p_4$ to the left 
or to the right of all old boundaries.)

This reasoning works in general, for adding a new polynomial $p_k$,
but how do we force all roots of $p'_k$ to be among the old
boundaries? The answer is that, at the beginning of
the algorithm, we \emph{extend} our collection of polynomials by adding 
their derivatives, second derivatives, etc. 
Then it suffices to process the polynomials
from smaller degrees to larger ones, and the above condition 
on $p'_k$ will always be met! 

Very good, so how do we perform Step~(i), determining the
sign of $p_k$ at the old boundaries?  Suppose that we
want to determine $\sgn p_k(\alpha)$, where $\alpha$ is a root
of $p_i$, $i<k$. The trick is to consider the \defi{remainder}
$r:=p_k\,\mathrm{mod}\, p_i$; i.e.,
$r$ is the unique polynomial with $\deg r<\deg p_i$ and
$p_k=qp_i+r$ for some polynomial~$q$. (We note that if $p_k$ and $p_i$ have
integer coefficients, $r$ may still have rational but non-integral
coefficients.)

We have
\[ r(\alpha)=p_k(\alpha)-q(\alpha)p_i(\alpha)= p_k(\alpha)\]
since $p_i(\alpha)=0$, and so, in particular,
$\sgn p_k(\alpha)=\sgn r(\alpha)$.
Thus, if $r$ is among the already processed polynomials $p_1,\ldots,p_{k-1}$,
we know its sign at the old boundaries, and so we can infer the sign of $p_k$
(without computation). This suggests that, similar to the
case with the derivatives, at the beginning of the algorithm
we also extend our collection of polynomials  by closing it
under taking remainders. 

\heading{Computing the closure. } Let $\PP=\{p_1,\ldots,p_m\}$
be the initial collection of polynomials, those in the given formula.
Let us define $\overline{\PP}$ as the closure of $\PP$
under taking the derivative and remainders, i.e., an inclusion-minimal
set of polynomials containing $\PP$ and closed under these two operations.

\begin{exercise}\label{ex:closure'rem}
 {\rm (a)} Describe an algorithm to compute $\overline{\PP}$ given $\PP$, and prove its finiteness (assuming that subroutines for computing
the derivative and remainder are available).

{\rm (b)}  What lower and upper bounds can you get for $|\overline{\PP}|$
in terms of $m$ and $\Delta$, the maximum degree of the~$p_i$?
\end{exercise}

\heading{Summary of the univariate 
sign-table algorithm. } The input to the algorithm
is a set $\PP$ of univariate polynomials with integer
coefficients, and the output is a sign table for a certain superset of~$\PP$.

\begin{enumerate}
\item We compute the closure $\overline\PP$ of $\PP$ under derivatives
and remainders as in Exercise~\ref{ex:closure'rem}.
We number the polynomials in $\overline\PP$ as $p_1,p_2,\ldots,p_M$
in such a way that $\deg p_i\le\deg p_j$ for $i<j$. (This numbering
is different from the original numbering of the input polynomials
in $\PP$; what used to be $p_3$ may now be $p_{1234}$.)
\item Let the sign table $T_0$ consist of a single empty sign vector.
For $k=1,2,\ldots,M$, compute $T_k$, the sign table for $(p_1,\ldots,p_k)$,
from $T_{k-1}$ according to the next step, and then output $T_M$.
\item If $\deg p_k=0$, i.e., $p_k$ is a constant, then determine its sign
and add the appropriate row to $T_{k-1}$. Otherwise, for 
$\deg p_k\ge 1$, do the two steps as above:
\begin{enumerate}
\item[(i)] Determine the sign of $p_k$ at $-\infty$ and at $+\infty$,
by inspecting the leading coefficient and the degree. 
Then for every boundary $a$ of $T_{k-1}$, find $p_i$, $i<k$, with $p_i(a)=0$,
and determine $\sgn p_k(a):=\sgn r(a)$, where $r:=p_k\,\mathrm{mod}\,p_i$.
Here $\sgn r(a)$ can be found in the sign table $T_{k-1}$, since
$r=p_j$ for some~$j<k$.
\item[(ii)] \ifafour\else\begin{sloppypar}\fi
For every interval $(a,b)$, where $a,b$ are consecutive boundaries
of $T_{k-1}$ (or $a=-\infty$ or $b=\infty$), inspect the signs of $p_k(a)$
and $p_k(b)$, and determine the behavior of $p_k$ on $(a,b)$:
If $(\sgn p_k(a))(\sgn p_k(b))=-1$, then $p_k$ has a single root in $(a,b)$
and we insert a new column into the sign table as described in the example
above. In all other cases, $\sgn p_k$ is constant and nonzero on
$(a,b)$, and it coincides with the nonzero
sign in  $\{\sgn p_k(a),\sgn p_k(b)\}$ (at least one of these must
be nonzero---why?).
 \ifafour\else\end{sloppypar}\fi
\end{enumerate}
\end{enumerate}

\subsection{The multivariate case. }

Now we consider the general case, where we want to eliminate the
existential quantifier from $(\exists X)\Phi_0(X,\bY)$, 
where $\bY=(Y_1,\ldots,Y_n)$ and $\Phi_0$ is quantifier-free.
Let us think of the polynomials $p_i(X,\bY)$ appearing in 
$\Phi_0$ as polynomials in $X$ whose coefficients are polynomials
in $Y_1,\ldots,Y_n$.

The idea, a very computer-science one, is to run the univariate
sign-table algorithm on the polynomials $p_1(X,\bY),\ldots,p_m(X,\bY)$.
Of course, that algorithm was derived for polynomials with integer
(or rational) coefficients, while here the coefficients are
polynomials. We thus need to look at the operations performed
in the algorithm in more detail.

One thing the univariate algorithm does is computing various
expressions in the coefficients using the four basic operations
$+,-,*,/$ (no square roots etc.). That we can do with polynomials as
well, obtaining rational functions in the $Y_j$, of the form
$u(\bY)/v(\bY)$, where $u(\bY)$ and $v(\bY)$ are 
polynomials.\footnote{Actually, the only place where division appears
is the computation of remainders. As in \cite{MichauxOzturk}
and elsewhere in algorithms dealing with multivariate polynomials,
one can avoid division completely by replacing
the remainder by the \defi{pseudoremainder}: the pseudoremainder
of polynomials $a(X)=\sum_{i=0}^d a_iX^i$ and 
$b(X)=\sum_{i=0}^e b_iX^i$, $d\ge e$, is the unique polynomial $r(X)$ such that
$b_e^{d-e+1} a(x)= q(X)b(X)+r(X)$.}

The other kind of operation in the algorithm is \emph{sign testing}:
for some already computed expression $E$, a rational function in our case,
the algorithm wants to know the sign of~$E$. 
When we try to work with
rational functions instead of rational numbers, we do not know the answer,
since it generally depends on the values of 
the~$Y_j$.\footnote{We should remark that sign testing may also occur 
in perhaps somewhat
unexpected parts of the algorithm. For example, when computing the
closure $\overline\PP$, the algorithm repeatedly computes the remainder
for two polynomials computed earlier, and for that, it needs to
know the degrees of these polynomials. If the coefficients of
the polynomials are rational functions of $\bY$, then the leading
coefficient, as well as some others,
 may vanish for some values of $\bY$, and so the algorithm must test which
is the highest power of $X$ whose coefficient is actually nonzero.}

Here comes the key idea of the multivariate case: we just give 
\emph{all three} possible answers to the algorithm. Then, instead of
following just one computation, we go into three different
branches of computation (this may delight fans of
the many-worlds interpretation of quantum mechanics). 

Such a branching is made at every sign test, so we get a ternary tree,
in which every possible computation of the algorithm corresponds to
one of the root-leaf paths, and each node is labeled by a rational
function $E=E(\bY)$ from the corresponding sign test. Here is an
(artificial and very small) example of such a tree:
\immfig{j-tree}
For a real but still simple example see Exercise~\ref{ex:quadr} below.

At the end of each computation, the univariate algorithm outputs
TRUE or FALSE, depending on whether the computed sign table contains
one of the sign vectors allowed by the formula $\Phi_0$. We label the
leaves of the tree by these output values.

Now we recall our actual goal: we want a quantifier-free formula $\Phi(\bY)$
equivalent to $(\exists X)\Phi_0(X,\bY)$. We construct $\Phi(\bY)$
as a formula describing the conditions under which the univariate
algorithm reaches a leaf labeled TRUE.  This is easy to do: 
We make $\Phi(\bY)$ as a disjunction of subformulas, each corresponding
to one TRUE leaf. The subformula corresponding to a given leaf $\ell$
should say that all the rational functions along the path to $\ell$
have the signs given by the chosen path. For example, the
path to the leftmost leaf in the picture above yields the
subformula
\[
Y_1^2-4Y_2<0\wedge Y_1Y_2-3<0 \wedge \frac{Y_1-Y_2}{Y_3+1}<0.
\]

This is not yet a formula in the first-order theory of the reals,
since it contains division. But we can replace sign testing for
rational functions by sign testing for polynomials, by testing
the numerator and the denominator separately.

This finishes the presentation of the quantifier-elimination
algorithm and the proof of Theorem~\ref{t:tarski}.

\begin{exercise}\label{ex:quadr} Apply the algorithm to the formula
$\Psi(A,B,C)\equiv (\exists X)AX^2+BX+C=0$, and see how the
familiar discriminant materializes in front of your eyes.
\end{exercise}

\subsection{On the complexity of quantifier-elimination algorithms}

Let the considered formula $\Psi$ contain $m$ polynomials of 
degree at most $\Delta$ each and with the number of bits
in each coefficient bounded by $\tau$. Further let it have
$\omega$ blocks of alternating quantifiers, with $n_i$ variables
in the $i$th block, and let $n_0$ be the number of free variables.
Let us set $N:=\prod_{i=0}^\omega (n_i+1)$, and let $C$ be a suitable constant.
According to \cite{BPR-QE,BasuPollackRoy-book}, there is an algorithm
that performs quantifier elimination for $\Psi$
using $m^{N} (\Delta+1)^{C^\omega N}$ arithmetic operations with
at most $((\Delta+1)^{C^\omega N}\tau)$-bit integers. 

For deciding an existential formula (i.e., for $n_0=0$ and $\omega=1$),
this yields an algorithm with $m^{n+1}(\Delta+1)^{O(n)}$ arithmetic
operations with integers having at most $(\Delta+1)^{O(n)}\tau$ bits.

\heading{Why quantifier elimination is doubly exponential.}
The following exercise provides an example of a quantified formula
of length $O(n)$ for which any quantifier elimination has to produce 
a quantifier-free formula of length doubly exponential in~$n$.
It is a simplified variant of an example by Davenport and Heintz
\cite{DavenportHeintz}.

\begin{exercise} Let the formula $\Psi_n(X,Y)$ be defined inductively:
$\Psi_0(X,Y)$ is  $Y=4X(1-X)$, and $\Psi_n(X,Y)$ is
\[
(\exists Z)(\forall U\, V)
\bigl((U=X\wedge V=Z)\vee (U=Z\wedge V=Y)\bigr)\Rightarrow
\Psi_{n-1}(U,V)
\]
(we have used the implication $\Rightarrow$, which strictly speaking
wasn't mentioned among the Boolean connectives appearing in
the considered formulas, but $A\Rightarrow B$ can be
replaced by $\neg A\vee B$). 

{\rm (a)} The formula $\Psi_0(X,Y)$ defines the \emph{logistic map}
$f(x)=4x(1-x)$. What map is defined by $\Psi_n(X,Y)$?

{\rm (b)} Show that the formula $\tilde \Psi_n(X)$, defined
as $\Psi_n(X,\frac12)$ (wait: fractions were not allowed
in formulas of the first-order theory of the reals,
so how do we write this properly?), defines a semialgebraic
subset of $\R$ whose definition by a quantifier-free formula
requires formula length $2^{2^{\Omega(n)}}$. You may want to use
Exercise~\ref{ex:1d-semi}.
\end{exercise}

\bibliographystyle{alpha}
\bibliography{gg}

\end{document}